\documentclass[conference,a4paper]{IEEEtran}
\pdfoutput=1

\usepackage[utf8]{inputenc}
\usepackage[innermargin=0.545in,outermargin=0.545in,top=0.545in,bottom=1in]{geometry}
\usepackage[english]{babel}
\usepackage[T1]{fontenc}
\usepackage{epsfig}
\usepackage{amsmath, amssymb, amsbsy}
\usepackage{mathdots}
\usepackage{xspace}
\usepackage[noend]{algpseudocode}
\usepackage{algorithm}
\usepackage{algorithmicx}
\usepackage{color}
\usepackage{cite}
\usepackage{booktabs}
\usepackage{verbatim}
\usepackage[colorlinks=true,citecolor=blue,linkcolor=blue,urlcolor=blue]{hyperref}
\usepackage{url}
\usepackage{lipsum}
\usepackage{tikz}
\usepackage{enumitem}
\usetikzlibrary{arrows,matrix,positioning}
\usepackage{colortbl}
\usepackage{flushend}

\newtheorem{theorem}{Theorem}
\newtheorem{lemma}[theorem]{Lemma}
\newtheorem{corollary}[theorem]{Corollary}

\newtheorem{remark}[theorem]{Remark}
\newtheorem{definition}{Definition}

\usepackage[final,tracking=true,kerning=true,spacing=true,factor=1100,stretch=10,shrink=20]{microtype}

\algrenewcommand\alglinenumber[1]{{\scriptsize#1}}
\algrenewcommand\algorithmicrequire{\textbf{Input:}}
\algrenewcommand\algorithmicensure{\textbf{Output:}}

% Setup reference system (i.e. vref/cref commands)
\usepackage{varioref}        % for \vref{:...} gives "As 1.2 on page 4"
% NOTE: Has to be late in loading order
\usepackage{fancyref}

\makeatletter
\def\mkfancyprefix#1#2{%
\expandafter\def\csname fancyref#1labelprefix\endcsname{#1}%
% plain lowercase
\begingroup\def\x{\endgroup\frefformat{plain}}%
    \expandafter\x\csname fancyref#1labelprefix\endcsname
    {\MakeLowercase{#2}\fancyrefdefaultspacing##1}%
% plain uppercase
\begingroup\def\x{\endgroup\Frefformat{plain}}%
    \expandafter\x\csname fancyref#1labelprefix\endcsname
    {#2\fancyrefdefaultspacing##1}%
% vario lowercase
\begingroup\def\x{\endgroup\frefformat{vario}}%
    \expandafter\x\csname fancyref#1labelprefix\endcsname
    {\MakeLowercase{#2}\fancyrefdefaultspacing##1##3}%
% vario uppercase
\begingroup\def\x{\endgroup\Frefformat{vario}}%
    \expandafter\x\csname fancyref#1labelprefix\endcsname
    {#2\fancyrefdefaultspacing##1##3}%
}
\makeatother
\fancyrefchangeprefix{\fancyrefeqlabelprefix}{eqn}
\mkfancyprefix{ssec}{Section}
\mkfancyprefix{tbl}{Table}
\mkfancyprefix{thm}{Theorem}
\mkfancyprefix{lem}{Lemma}
\mkfancyprefix{cor}{Corollary}
\mkfancyprefix{prop}{Proposition}
\mkfancyprefix{prob}{Problem}
\mkfancyprefix{alg}{Algorithm}
\mkfancyprefix{inv}{Invariant}
\mkfancyprefix{ex}{Example}
\mkfancyprefix{line}{Line}
\mkfancyprefix{def}{Definition}
\mkfancyprefix{itm}{Item}
\mkfancyprefix{app}{Appendix}
\mkfancyprefix{rem}{Remark}
\newcommand{\cref}[1]{\Fref{#1}}

\def\ve#1{{\mathchoice{\mbox{\boldmath$\displaystyle #1$}}%
              {\mbox{\boldmath$\textstyle #1$}}%
              {\mbox{\boldmath$\scriptstyle #1$}}%
              {\mbox{\boldmath$\scriptscriptstyle #1$}}}}

%% Todos

\definecolor{sunset}{rgb}{1,0.5,.05}
\definecolor{marine}{rgb}{0,0,.7}
\definecolor{navy}{rgb}{0,0,.5}
\definecolor{forest}{rgb}{0,.6,0}
\definecolor{brown}{rgb}{0.59, 0.29, 0.0}

\newcommand{\jsrn}[1]{{\color{sunset}[/jsrn: #1]}}
\newcommand{\pu}[1]{{\color{marine}[/pu: #1]}}
\newcommand{\pabe}[1]{{\color{forest}[/pabe: #1]}}
\newcommand{\alternative}[1]{{\color{brown}[/alternative: #1]}}
 \renewcommand{\jsrn}[1]{}
 \renewcommand{\pu}[1]{}
 \renewcommand{\pabe}[1]{}
 \renewcommand{\alternative}[1]{}

%%% Algebra definitions

\newcommand{\mo}{{-1}}

\newcommand{\F}{\mathbb{F}}

\newcommand{\ZZ}{\mathbb{Z}}

 %\mathrm{M}_n(\Q)}

 % Orthogonality defect

\renewcommand{\v}{\ensuremath{\ve{v}}}

         % mod infix operator (less spacing than \mod)

\newcommand{\NN}{\mathbb{N}}

 % alternative subscripting

% define proof environment as IEEEproof
% IEEEtran will redefine proof at the beginning of document to include a
% stupid warning, so the only way to override this is to redefine it again at
% beginning of document. As this command is run after IEEEtran, this
% at-beginning-of-document redefinition will occur *after* IEEEtran's.
\AtBeginDocument{}

%%%%%%%%%%%%%%%%%%%%%%%%%%%%%%%%%%%%%%%%%%%%%%%%%%%%%%%%%%%%%%%%%%%%%%%%%%%%%%
% Others
%%%%%%%%%%%%%%%%%%%%%%%%%%%%%%%%%%%%%%%%%%%%%%%%%%%%%%%%%%%%%%%%%%%%%%%%%%%%%%

\newcommand{\Fq}{\mathbb{F}_q}
\newcommand{\Fs}{\mathbb{F}_s}

\newcommand{\evpolys}{\mathcal{V}_{k,t,h,\eta}}

\newcommand{\Code}{\mathcal{C}}
\newcommand{\TRS}[3]{\Code_{#2}^{#3}(#1)}

\newcommand{\Cstar}{\TRS{\alphaVec,\eta}{k}{*}}
\newcommand{\Cplus}{\TRS{\eta,\ast}{k}{+}}
\newcommand{\Csingle}{\TRS{\alphaVec,t,h,\eta}{k}{}}

\newcommand{\twisted}{$(k,t,h,\eta)$-twisted }

\newcommand{\CGRS}[1]{\mathcal{C}_{#1}^\mathrm{GRS}}

\newcommand{\Iset}{\mathcal{I}}
\newcommand{\Af}{\Iset}
\newcommand{\Zmatrix}{\ve 0}

\newcommand{\A}{\ve A}
\newcommand{\G}{\ve G}

\newcommand{\I}{\ve I}

\newcommand{\x}{\ve x}

\newcommand{\fpolyi}[1]{f^{(#1)}}

\newcommand{\Atilde}{\ve{\tilde{A}}}

\newcommand{\ev}[2]{\mathrm{ev}_{#2}(#1)}

%\boxed{\phantom{\cdot}}}

\newcommand{\transpose}{^\mathrm{T}}

\newcommand{\alphaVec}{{\ve \alpha}}

\newcommand{\g}{\ve g}
\newcommand{\etaSet}{\mathcal{H}}

%\cellcolor[gray]{0.8}}

\newcommand{\Startw}{$(*)$-Twisted }
\newcommand{\startw}{$(*)$-twisted }

\newcommand{\plustw}{$(+)$-twisted }

\newcommand\Osoft{O^{\scriptscriptstyle \sim}\!}

\definecolor{myred}{rgb}{0.7,0,0}

\begin{document}

\title{Twisted Reed--Solomon Codes}
\author{\IEEEauthorblockN{Peter Beelen$^{1}$, Sven Puchinger$^{2}$, and Johan Rosenkilde n\'e Nielsen$^{1}$}\\
\IEEEauthorblockA{
$^1$Department of Applied Mathematics \& Computer Science, Technical University of Denmark, Lyngby, Denmark\\
$^2$Institute of Communications Engineering, Ulm University, Ulm, Germany\\
Email: \emph{pabe@dtu.dk, sven.puchinger@uni-ulm.de, jsrn@jsrn.dk}
}}

\maketitle

\begin{abstract}
We present a new general construction of MDS codes over a finite field $\Fq$.
We describe two explicit subclasses which contain new MDS codes of length at least $q/2$ for all values of $q \ge 11$.
Moreover, we show that most of the new codes are not equivalent to a Reed--Solomon code.
\end{abstract}

\begin{IEEEkeywords}
MDS Codes, Reed--Solomon Codes
\end{IEEEkeywords}

\section{Introduction}

\noindent
A \emph{maximum distance separable} (MDS) code $\Code(n,k,d)$ of length $n$, dimension $k$, and minimum distance $d$ is a linear code attaining the Singleton bound, i.e., $d=n-k+1$ \cite{singleton1964maximum}.
The most prominent MDS codes are \emph{generalized Reed--Solomon} (GRS) codes \cite{reed1960polynomial}.
However, there are many other known constructions for MDS codes, e.g., based on the equivalent problem of finding $n$-arcs in projective geometry \cite{macwilliams1977theory}, circulant matrices \cite{roth_construction_1989}, Hankel matrices \cite{roth1985generator}, or extending GRS codes.

Recently, Sheekey \cite{sheekey_new_2015} introduced a new class of \emph{maximum rank distance} codes---which are MDS in terms of the \emph{rank metric}.
These codes, \emph{Twisted Gabidulin} codes, were shown to be not equivalent to Gabidulin codes (the rank-metric analogues of Reed--Solomon codes).

In this paper, we introduce a new construction of Hamming-metric MDS codes, inspired by~\cite{sheekey_new_2015}.
The idea is to evaluate polynomials of the form
\[
  f(x) = a_0 + a_1 x + \ldots + a_{k-1} x^{k-1} + \eta a_h x^{k-1+t} ,
\]
for some $0 \leq h < k$ and $t < n-k$.
By a prudent choice of $t, h$ and $\eta$, we ensure that any such polynomial has at most $k-1$ zeroes among the evaluation points, even though their degree is larger than $k-1$.
This is enough to ensure that the resulting code is MDS.

We single out two explicit subclasses of this construction where the MDS property can be a priori ensured.
These contain codes of length up to roughly $q/2$ for any $q$, and we show that for $q \geq 11$, they contain non-GRS MDS codes.

The results we obtain are somewhat reminiscent of results in \cite{roth_construction_1989}, where non-GRS MDS codes were constructed of length roughly $q/2$ for even $q$.
However, our construction is very different, and for small values of $q$ we verified using a computer that our construction produces codes inequivalent to the ones mentioned in \cite{roth_construction_1989}.
More importantly, our construction also gives a very simple way to produce non-GRS MDS codes of length at least $q/2$ if $q$ is odd and $q \ge 11$. 

Besides adding new codes to the family of known MDS codes, the new code class might be interesting for code-based cryptography.
As future work, we will analyze whether our codes or their subfield subcodes are suitable for this purpose.

\section{Preliminaries}

\noindent
In this section, we recall several definitions and known results for future use in the paper. We denote by $\Fq$ the finite field with $q$ elements.

\begin{definition}[\!\!\cite{roth2006introduction}]
Let $\alpha_1,\dots,\alpha_n \in \Fq \cup \{\infty\}$ be distinct, $k<n$, and $v_1,\dots,v_n \in \Fq^*$. The corresponding \emph{generalized Reed--Solomon (GRS) code} is defined by
\begin{align*}
\CGRS{n,k} = \left\{ [v_1 f(\alpha_1), \dots, v_n f(\alpha_n)] : \, f \in \Fq[x], \, \deg f <k \right\}.
\end{align*}
In this setting, for a polynomial $f$ of degree $\deg f < k$, the quantity $f(\infty)$ is defined as $a_{k-1}$, the coefficient of $x^{k-1}$ in the polynomial $f$.
\end{definition}

In case $v_i=1$ for all $i$, the code is called a \emph{Reed--Solomon (RS) code}. Any non-zero evaluation polynomial $f$ is of degree $\deg f <k$ and hence has at most $k-1$ roots among the evaluation points $\alpha_1,\dots,\alpha_n$ in $\Fq$. If $f(\infty)$ ``evaluates'' to zero, this just means that $\deg f < k-1$ and hence $f$ has at most $k-2$ roots among the remaining evaluation points. This proves that a GRS code is MDS.

In this article, we will construct MDS codes of length $n$ and dimension $k$ using spaces of polynomials that may contain elements of degree $\deg f \geq k$. To define our codes, we will use the following map.

\begin{definition}\label{def:eval_map}
Let $\mathcal{V} \subset \Fq[X]$ be a $k$-dimensional $\Fq$-linear subspace. Let $\alpha_1,\dots,\alpha_n \in \Fq \cup \{\infty\}$ be distinct and write $\alphaVec = [\alpha_1,\dots,\alpha_n]$. We call $\alpha_1,\dots,\alpha_n$ the \emph{evaluation points}. Then we define the \emph{evaluation map} of $\alphaVec$ on $\mathcal{V}$ by
\begin{align*}
\ev{\cdot}{\alphaVec} : \mathcal{V} \to \Fq^n, \quad
f \mapsto [f(\alpha_1),\dots,f(\alpha_n)].
\end{align*}
Here $f(\infty)$ is defined as $a_{\ell}$, the coefficient of $x^{\ell}$ in the polynomial $f \in \mathcal{V}$, where $\ell:=\max \deg\{f : f \in \mathcal{V}\}.$
\end{definition}

The evaluation map above is $\Fq$-linear. This means in particular that if for a given $\mathcal{V}$ the evaluation map is injective, then the code $\ev{\mathcal V}{\alphaVec} \subset \Fq^n$ will be an $\Fq$-linear code of length $n$ and dimension $k$.
Injectivity is immediate if $\ell < n$.
If $\mathcal{V}$ consists of all polynomials of degree strictly less than $k$, the resulting code is an RS code.
For other choices of $\mathcal{V}$, the resulting code might still be \emph{equivalent} to an RS code; we use the following notion of code equivalence.

\begin{definition}
Let $\Code_1,\Code_2$ be $\Fq$-linear $[n,k]$ codes. We say that $\Code_1$ and $\Code_2$ are \emph{equivalent} if there is a permutation $\pi \in S_n$ and $\v := [v_1,\dots,v_n] \in \left(\Fq^*\right)^n$ such that $\Code_2 = \varphi_{\pi,\v}\left( \Code_1 \right)$ where $\varphi_{\pi, \v}$ is the Hamming-metric isometry
\begin{align*}
\varphi_{\pi,\v} : \Fq^n \to \Fq^n, \quad
[c_1,\dots,c_n] \mapsto [v_1 c_{\pi(1)},\dots,v_n c_{\pi(n)}].
\end{align*}
\end{definition}

It is clear from the above definition that a code is a GRS code if and only if it is equivalent to an RS code.
The following well-known theorem provides an effective tool to decide whether a code is equivalent to an RS code.

\begin{theorem}[\!\!\cite{roth1985generator,roth1989mds}]\label{thm:GRS_characterization}
A linear code with generator matrix $\G = [\I \mid \A]$ is a GRS code if and only if
\begin{enumerate}[label=(\roman*)]
\item\label{itm:GRS_characterization_1} All entries of $\A$ are non-zero.
\item\label{itm:GRS_characterization_2} All $2 \times 2$ minors of $\Atilde$ are non-zero, and
\item\label{itm:GRS_characterization_3} All $3 \times 3$ minors of $\Atilde$ are zero,
\end{enumerate}
where $\Atilde \in \Fq^{k \times n-k}$ is given by $\tilde{A}_{ij} = A_{ij}^{-1}$.
\end{theorem}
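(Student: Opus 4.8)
The plan is to route the argument through the classical fact that, up to independently scaling rows and columns, the matrices $\A$ for which $[\I\mid\A]$ generates a GRS code are precisely the \emph{generalized Cauchy matrices}, and then to recognise conditions (i)--(iii) as a coordinate-free description of that class. Call $\A\in\Fq^{k\times(n-k)}$ a generalized Cauchy matrix if there are pairwise distinct $a_1,\dots,a_k\in\Fq\cup\{\infty\}$, pairwise distinct $b_1,\dots,b_{n-k}\in\Fq\cup\{\infty\}$ with $\{a_i\}\cap\{b_j\}=\emptyset$, and $x_i,y_j\in\Fq^*$, such that $A_{ij}=x_iy_j/(a_i-b_j)$, with the convention that a row with $a_i=\infty$ or a column with $b_j=\infty$ instead equals $x_iy_j$. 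I would then establish two equivalences: (a) $[\I\mid\A]$ generates a GRS code if and only if $\A$ is a generalized Cauchy matrix; and (b) $\A$ is a generalized Cauchy matrix if and only if (i), (ii) and (iii) hold. Chaining these proves the theorem.

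For (a), direction ``$\Leftarrow$'' is a Lagrange-interpolation computation: row-reducing the (suitably extended) Vandermonde generator matrix of the RS code on evaluation points $a_1,\dots,a_k,b_1,\dots,b_{n-k}$ to systematic form on the first $k$ coordinates yields $[\I\mid\C]$ with $C_{ij}=-L(b_j)/\bigl(L'(a_i)(a_i-b_j)\bigr)$, where $L(x)=\prod_{m=1}^{k}(x-a_m)$; a general generalized Cauchy matrix differs from such a $\C$ only by scaling rows and columns, and scaling the rows of $[\I\mid\A]$ can be absorbed into a column scaling, so $[\I\mid\A]$ generates a code equivalent to an RS code and is thus GRS. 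Conversely (``$\Rightarrow$''), a GRS code has some generator matrix $M$ that is an extended Vandermonde matrix with permuted and scaled columns; since the first $k$ columns of $[\I\mid\A]$ are linearly independent, so are the corresponding columns of $M$, which therefore form an invertible block, and left-multiplying $M$ by its inverse and reusing the Lagrange computation produces $[\I\mid\A']$ with $\A'$ generalized Cauchy. As $[\I\mid\A]$ and $[\I\mid\A']$ are systematic forms of the same code, $\A=\A'$.

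For (b), direction ``$\Rightarrow$'' is immediate: $A_{ij}=x_iy_j/(a_i-b_j)\neq 0$ gives (i); the identity $\tilde A_{ij}=(a_i-b_j)/(x_iy_j)=(a_i/x_i)(1/y_j)+(1/x_i)(-b_j/y_j)$ exhibits $\Atilde$ as a sum of two rank-one matrices, whence $\rank\Atilde\le 2$ and (iii); and $\tilde A_{i_1j_1}\tilde A_{i_2j_2}-\tilde A_{i_1j_2}\tilde A_{i_2j_1}=(a_{i_1}-a_{i_2})(b_{j_1}-b_{j_2})/(x_{i_1}x_{i_2}y_{j_1}y_{j_2})\neq 0$ gives (ii) (the cases with a point at $\infty$ are checked the same way on the degenerate row or column). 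The substantive direction is ``$\Leftarrow$'': assume (i)--(iii). If $\min(k,n-k)=1$ the code is a full-weight code of dimension or codimension one, hence trivially GRS, so assume $\min(k,n-k)\ge 2$. By (iii), or automatically when there are fewer than three rows or columns, $\rank\Atilde\le 2$; by (ii), $\rank\Atilde\ge 2$; so $\tilde A_{ij}=\vec u_i\transpose\vec w_j$ for suitable $\vec u_i,\vec w_j\in\Fq^2$. A $2\times 2$ minor of $\Atilde$ then factors as $\det[\vec u_{i_1}\;\vec u_{i_2}]\cdot\det[\vec w_{j_1}\;\vec w_{j_2}]$, so (ii) forces the $\vec u_i$, and the $\vec w_j$, to be pairwise linearly independent, while (i) says $\vec u_i\transpose\vec w_j\neq 0$ for all $i,j$. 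Viewing the $\vec u_i,\vec w_j$ as points of $\mathbb{P}^1(\Fq)$, I apply a change of basis $T\in\mathrm{GL}_2(\Fq)$ — replacing $\vec u_i\mapsto T\transpose\vec u_i$ and $\vec w_j\mapsto T^{-1}\vec w_j$, which leaves $\Atilde$ fixed — so that every $\vec u_i$ has nonzero first coordinate and every $\vec w_j$ has nonzero second coordinate; then, writing $\vec u_i=(u_i^{(1)},u_i^{(2)})$ and $\vec w_j=(w_j^{(1)},w_j^{(2)})$, the choices $a_i:=u_i^{(2)}/u_i^{(1)}$, $b_j:=-w_j^{(1)}/w_j^{(2)}$, $x_i:=1/u_i^{(1)}$, $y_j:=1/w_j^{(2)}$ give $\tilde A_{ij}=(a_i-b_j)/(x_iy_j)$, with pairwise independence of the $\vec u_i$ (resp.\ $\vec w_j$) turning into distinctness of the $a_i$ (resp.\ $b_j$) and $\vec u_i\transpose\vec w_j\neq 0$ turning into $\{a_i\}\cap\{b_j\}=\emptyset$. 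Hence $\A$ is generalized Cauchy.

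The step I expect to do the real work is ``$\Leftarrow$'' of (b): reconstructing the explicit rational parametrization from the bare minor conditions. The rank-two factorization of $\Atilde$ and the projective reading of (i) and (ii) are the engine, but the delicate point is the existence of the normalizing $T$: the only obstruction is that a single projective point of $\mathbb{P}^1(\Fq)$ must be chosen outside a prescribed set of at most $n$ points; since $|\mathbb{P}^1(\Fq)|=q+1$ this is possible whenever $n\le q$, whereas when $n=q+1$ one is forced to pick a point in that set, which pushes exactly one row or column of $\Atilde$ into the degenerate shape — corresponding to an evaluation point at $\infty$ — and one must then verify that this degenerate Cauchy form matches the $f(\infty)$ convention built into the definition of GRS codes. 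The rest is bookkeeping for those $\infty$ cases throughout, together with dismissing the $\min(k,n-k)=1$ boundary, where (ii) and (iii) are vacuous.
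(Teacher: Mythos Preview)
The paper does not prove this theorem at all: it is quoted as a known result from \cite{roth1985generator,roth1989mds} and used as a black box (notably in the proof of Theorem~\ref{thm:not_GRS_one_twist}). There is therefore no ``paper's own proof'' to compare your proposal against.

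That said, your outline is essentially the argument of the cited Roth--Seroussi paper: pass through generalized Cauchy matrices, using Lagrange interpolation for the equivalence with GRS generator matrices, and the rank-two factorisation of $\Atilde$ together with the projective-line interpretation of the $2\times 2$ minor condition for the characterisation by (i)--(iii). The argument is correct. One small imprecision: in your discussion of the normalising $T\in\mathrm{GL}_2(\Fq)$, the two requirements (nonzero first coordinate of $T\transpose\vec u_i$ for all $i$, nonzero second coordinate of $T^{-1}\vec w_j$ for all $j$) both turn out to depend only on the projective class of the \emph{first column} of $T$, so there really is a single $\mathbb{P}^1$-choice as you say, but it must simultaneously avoid two families of forbidden points (one coming from the $\vec u_i$, one from the $\vec w_j$), totalling at most $k+(n-k)=n$ points. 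Your conclusion that $n\le q$ suffices, and that $n=q+1$ forces exactly one evaluation point at $\infty$, is correct.
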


Note that any MDS code has a generator matrix of the form $\G = [\I \mid \A]$ and that items \ref{itm:GRS_characterization_1} and \ref{itm:GRS_characterization_2} above are satisfied for this $A$. Hence the difference between GRS and non-GRS MDS codes will only become apparent using item \ref{itm:GRS_characterization_3}. If $k<3$ or $n-k<3$, the matrix $A$ has no $3 \times 3$ minors, so the following corollary holds.
\begin{corollary}\label{cor:min_k_n-k_>_2}
Suppose that $k<3$ or $n-k<3$. Any MDS code of length $n$ and dimension $k$ is equivalent to an RS code.
\end{corollary}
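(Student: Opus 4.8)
The plan is to invoke \cref{thm:GRS_characterization} and observe that the only distinguishing condition between GRS and general MDS codes is item \ref{itm:GRS_characterization_3}, which concerns $3 \times 3$ minors. First I would note the remark already made in the text: any MDS code of length $n$ and dimension $k$ has a generator matrix of the form $\G = [\I \mid \A]$ with $\A \in \Fq^{k \times (n-k)}$, and moreover all entries of $\A$ are non-zero (else some coordinate would be identically zero on a $(k-1)$-subset, contradicting minimum distance $n-k+1$) and all $2 \times 2$ minors of $\Atilde$ are non-zero (equivalently, all $2 \times 2$ minors of $\A$ are non-zero, which follows from the MDS property applied to appropriate $k$-subsets of columns). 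Hence items \ref{itm:GRS_characterization_1} and \ref{itm:GRS_characterization_2} hold automatically.

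The key observation is then purely a matter of matrix dimensions: a $3 \times 3$ minor of $\Atilde \in \Fq^{k \times (n-k)}$ requires choosing $3$ rows and $3$ columns, which is impossible whenever $k < 3$ or $n-k < 3$. In that case the condition in item \ref{itm:GRS_characterization_3} is vacuously satisfied (there are no $3 \times 3$ minors to check), so all three conditions of \cref{thm:GRS_characterization} hold, and therefore the code is a GRS code. Since a code is a GRS code if and only if it is equivalent to an RS code, the corollary follows.

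I would present this in two or three sentences: assume $\Code$ is MDS of parameters $[n,k]$ with $k < 3$ or $n-k < 3$, write it with generator matrix $[\I \mid \A]$, record that \ref{itm:GRS_characterization_1}–\ref{itm:GRS_characterization_2} hold as noted in the text, observe that \ref{itm:GRS_characterization_3} is vacuous under the dimension hypothesis, and conclude via \cref{thm:GRS_characterization}. There is essentially no obstacle here — the statement is a direct corollary — so the only care needed is to make explicit why items \ref{itm:GRS_characterization_1} and \ref{itm:GRS_characterization_2} are automatic for an MDS code (a one-line appeal to the MDS property evaluated on suitable column subsets), which the preceding paragraph in the excerpt already asserts and which I would simply cite rather than reprove.
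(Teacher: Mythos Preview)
Your proposal is correct and follows exactly the same approach as the paper: the paragraph preceding the corollary already records that items \ref{itm:GRS_characterization_1} and \ref{itm:GRS_characterization_2} hold for any MDS code, and that item \ref{itm:GRS_characterization_3} is vacuous when $k<3$ or $n-k<3$ because $\Atilde$ then has no $3\times 3$ minors. There is nothing to add or change.
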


We finish this section by quoting results on $t$-sum generators in abelian groups from \cite{roth_construction_1989, roth_t-sum_1992}. We will apply these results to the abelian groups $(\Fq^*,\cdot)$ and $(\Fq,+)$ to analyse several instances of our code construction in the coming sections.

\begin{definition}[\!\!\cite{roth_t-sum_1992}]
Let $(A,\oplus)$ be a finite abelian group and $k \in \NN$. A subset $S \subset A$ is called a \emph{$k$-sum generator} of $A$ if for all $a \in A$, there are distinct $s_1,\dots,s_k \in S$ such that $a=\bigoplus_{i=1}^{k} s_i$. We denote by $M(k,A)$ the smallest integer such that any $S \subset A$ with $|S|>M(k,A)$ is a $k$-sum generator of $A$.
\end{definition}

\begin{lemma}[\!\!{\cite[Theorem~3.1]{roth_t-sum_1992}}]\label{lem:tsum_roth_theorem_even_order}
Let $|A|= 2r$ for some $r \geq 6$. Then, for any $3 \leq k \leq r-2$,
\begin{align*}
M(k,A) = r,
\end{align*}
except if $A \in \{\ZZ_2^m,\ZZ_4 \times \ZZ_2^{m-1}\}$ for some $m > 1$ and $k \in \{3,r-2\}$ in which case
\begin{align*}
M(k,A) = r+1.
\end{align*}
\end{lemma}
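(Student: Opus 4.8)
The plan is to prove the two matching inequalities $M(k,A)\ge r$ and $M(k,A)\le r$ — with $r$ replaced by $r+1$ in the exceptional cases — separately; the first is an explicit construction and the second is the real work.

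\emph{Lower bound.} Since $2\mid|A|$, the group $A$ has a subgroup $B$ of index $2$, so $|B|=r$. Every $\oplus$-sum of elements of $B$ stays in $B$, whether or not the summands are distinct, so $B$ is not a $k$-sum generator for any $k$, giving $M(k,A)\ge|B|=r$. For the exceptional groups one sharpens this. Choose an index-$2$ subgroup $B$ all of whose elements have order dividing $2$ — any index-$2$ subgroup of $\ZZ_2^m$, and $2\ZZ_4\times\ZZ_2^{m-1}$ inside $\ZZ_4\times\ZZ_2^{m-1}$ — fix $x\notin B$, and put $S=B\cup\{x\}$, so $|S|=r+1$. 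A sum of three distinct elements of $S$ either omits $x$ and lies in $B$, or equals $x\oplus b_1\oplus b_2$ with $b_1\ne b_2$ in $B$; and in a group of exponent $2$ the set $\{b_1\oplus b_2:b_1\ne b_2\in B\}$ is exactly $B\setminus\{0\}$, so such sums cover exactly $(x\oplus B)\setminus\{x\}=(A\setminus B)\setminus\{x\}$. Thus $x$ is representable by neither type, so $S$ is not a $3$-sum generator and $M(3,A)\ge r+1$. For $k=r-2$, put $\sigma=\bigoplus_{s\in S}s$: a $k$-subset of $S$ sums to $a$ iff its complement sums to $\sigma\ominus a$, so $S$ is a $k$-sum generator iff it is an $(|S|-k)$-sum generator, and for $|S|=r+1$, $k=r-2$ this is the $3$-sum case, whence $M(r-2,A)\ge r+1$.

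\emph{Upper bound.} One must show that every $S\subseteq A$ with $|S|\ge r+1$ (and with $|S|\ge r+2$ when $A$ is exceptional and $k\in\{3,r-2\}$) is a $k$-sum generator. The complementation equivalence above sends the range $3\le k\le r-2$ to itself when $|S|=r+1$, so we may assume $3\le k\le\lceil|S|/2\rceil$. The mechanism is a lower bound on the restricted sumset $\{s_1\oplus\cdots\oplus s_k:s_i\in S\text{ pairwise distinct}\}$: away from an explicit list of extremal configurations — cosets of subgroups and the exponent-$2$ near-cosets from the construction above — it has size at least $k(|S|-k)+1$, a Dias da Silva--Hamidoune-type estimate whose extensions to arbitrary abelian groups are due to Hamidoune--Serra--Z\'emor and K\'arolyi; in the present dense regime one gains by also invoking Kneser's theorem, or simply the elementary inequality $|S\cap(g\ominus S)|\ge 2|S|-|A|$. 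Since $|S|\ge r+1$ and $3\le k\le r-2$ give $k(|S|-k)+1>r=|A|/2$, the restricted sumset covers more than half of $A$; one then argues that a subset of $A$ of size exceeding $|A|/2$ which is not all of $A$ can only occur when $A$ is one of the two listed $2$-groups and $k$ is at an endpoint of its range, and that the extra point supplied by $|S|\ge r+2$ closes that last gap. The proof in \cite{roth_t-sum_1992} is more combinatorial, but this is its shape.

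\emph{Main obstacle.} All the difficulty sits in the upper bound, and precisely in matching the extremal cases of the restricted-sumset inequality to the exact list $\{\ZZ_2^m,\ \ZZ_4\times\ZZ_2^{m-1}\}$ and the exact endpoints $k\in\{3,r-2\}$: for every other group of order $2r$ and every other $k$ in range one must rule out that a set of size $>|A|/2$ which is a coset or an exponent-$2$ near-coset arises as a restricted $k$-fold sumset of some $S$ with $|S|=r+1$, while showing that a single extra point always destroys the obstruction. Controlling the $k\leftrightarrow|S|-k$ symmetry so the argument never leaves $3\le k\le r-2$ is the remaining bookkeeping; the hypotheses $r\ge 6$ and $k\le r-2$ are exactly what make these margins positive.
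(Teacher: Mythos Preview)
The paper does not prove this lemma at all: it is quoted verbatim from \cite[Theorem~3.1]{roth_t-sum_1992} and followed only by the remark ``This lemma suffices for our purposes.'' There is therefore no in-paper proof to compare your proposal against.

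As a standalone sketch, your lower-bound half is clean and correct: the index-$2$ subgroup gives $M(k,A)\ge r$ for every $k$, and the exponent-$2$ near-coset $B\cup\{x\}$ together with the $k\leftrightarrow|S|-k$ complementation handles the two exceptional endpoints exactly as one would want.

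Your upper-bound half, however, is only a shape, and it contains a slip that muddles the logic. With $|S|=r+1$ and $3\le k\le r-2$ one has $k(|S|-k)+1\ge 3r-5$, and for $r\ge 6$ this already exceeds $|A|=2r$, not merely $r=|A|/2$ as you wrote. So the intended conclusion from a Dias da Silva--Hamidoune/K\'arolyi-type bound is that the restricted $k$-fold sumset is all of $A$ \emph{outright}, except in the structured extremal configurations; there is no ``more than half of $A$'' step, and your sentence ``a subset of $A$ of size exceeding $|A|/2$ which is not all of $A$ can only occur when \ldots'' is false as stated (such subsets exist in every group). The genuine remaining work is then exactly what you flag in your ``Main obstacle'' paragraph: pinning the extremal cases of the restricted-sumset inequality to precisely $\ZZ_2^m$ and $\ZZ_4\times\ZZ_2^{m-1}$ at $k\in\{3,r-2\}$, and showing one extra point kills the obstruction. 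You do not carry this out, and the restricted-sumset theorems you invoke for general abelian groups postdate \cite{roth_t-sum_1992}, so the original argument is necessarily more hands-on; your sketch is a plausible modern route but not a proof.
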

This lemma suffices for our purposes. See \cite{roth_construction_1989, roth_t-sum_1992} for more information in case $|A|$ is odd.

\section{Twisted Reed--Solomon Codes}\label{sec:CTRS_one_twist}

\noindent
In this section, we present our new code construction.
Similar to RS codes, we evaluate polynomials whose first $k$ coefficients we can choose arbitrarily.
The difference is that we allow another monomial of degree larger than $k-1$ to occur in the polynomials as well.
We define a set of evaluation polynomials as follows.

\begin{definition}\label{def:twisted_polynomials}
Let $k,t,h \in \NN$ such that $0 \le h<k\leq q$ and let $\eta \in \Fq \backslash \{0\}$. Then, we define the set of \emph{\twisted polynomials} by
\begin{align*}
\evpolys = \left\{ f = \sum_{i=0}^{k-1} a_i x^i + \eta a_h x^{k-1+t} : a_i \in \Fq \right\},
\end{align*}
where we call $h$ the \emph{hook} and $t$ the \emph{twist}.
\end{definition}

Note that $\evpolys \subseteq \Fq^n$ is a $k$-dimensional $\Fq$-linear subspace. Using the evaluation map from Definition \ref{def:eval_map} for $\mathcal{V}=\evpolys$, we obtain the codes that we will study in this article.

\begin{definition}\label{def:CTRS_one_twist}
Let $\alpha_1,\dots,\alpha_n \in \Fq \cup \{\infty\}$ be distinct and write $\alphaVec = [\alpha_1,\dots,\alpha_n]$.
Let $k,t,h,\eta$ be chosen as in Definition~\ref{def:twisted_polynomials} such that $k<n$ and $t \leq n-k$. Then, the corresponding \emph{twisted Reed--Solomon code} of length $n$ and dimension $k$ is given by
\begin{align*}
\Csingle = \ev{\evpolys}{\alphaVec} \subseteq \Fq^n.
\end{align*}
\end{definition}

For brevity, we will use the phrase \emph{twisted codes} rather than \emph{twisted Reed--Solomon codes} from now on.
Note that $\Csingle$ indeed has dimension $k$ since the evaluation map is injective: any polynomial $f \in \evpolys$ satisfies $\deg f \le k-1+t < n$. In principle $\eta$ could be $0$ in the above definition, but in that case we simply obtain RS codes.

\section{MDS Twisted Codes}
\label{sec:MDS_TRS_Codes}

In general, twisted codes are not MDS for all parameters $\alphaVec,k,t,h,\eta$. However, in this section we will describe several classes of twisted codes that are always MDS.

\subsection{\startw Codes}
\label{ssec:Max_Length_MDS_Primitively Twisted}

\noindent
If $(t,h)=(1,0)$, it is possible to give a succinct condition on when the code $\TRS{1,0,\eta,\alphaVec}{k}{}$ is MDS. More precisely, we have the following:
\begin{lemma}\label{lem:cond_MDS_very_simple_twist}
Let $k<n$, $\alpha_1,\dots,\alpha_n \in \Fq$ distinct and $\eta \in \Fq$. Then the twisted code $\TRS{\alphaVec,1,0,\eta}{k}{}$ is MDS if and only if
\begin{align}
\eta (-1)^k \prod_{i \in \Iset} \alpha_i \neq 1 \quad \forall \Iset \subseteq \{1,\dots,n\} \textrm{ s.t. } |\Iset|=k. \label{eq:cond_MDS_very_simple_twist}
\end{align}
\end{lemma}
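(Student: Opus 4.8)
The plan is to characterize the MDS property of $\TRS{\alphaVec,1,0,\eta}{k}{}$ directly through the evaluation polynomials. A code defined as an evaluation code $\ev{\mathcal V}{\alphaVec}$ of a $k$-dimensional space $\mathcal V$ is MDS if and only if every nonzero $f \in \mathcal V$ vanishes at at most $k-1$ of the evaluation points $\alpha_1,\dots,\alpha_n$ (this is just the statement that the minimum distance is $n-k+1$, translated through the injective evaluation map). So the task reduces to: the code is MDS $\iff$ no nonzero $f = \sum_{i=0}^{k-1} a_i x^i + \eta a_0 x^{k}$ (here $h=0$, $t=1$, so the twisted monomial is $\eta a_0 x^{k}$) has $k$ or more roots among $\alpha_1,\dots,\alpha_n$.

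First I would observe that any $f \in \evpolys[k,1,0,\eta]$ has degree at most $k$, so it can have at most $k$ roots in $\Fq$ total; hence having ``at least $k$ roots among the $\alpha_i$'' is equivalent to having exactly $k$ roots, all among the $\alpha_i$, i.e. $f$ splits as $f = \eta a_0 \prod_{i \in \Iset}(x - \alpha_i)$ for some size-$k$ index set $\Iset \subseteq \{1,\dots,n\}$, where $\eta a_0 = \LC{f}$ is the leading coefficient. (If $a_0 = 0$ then $f$ has degree $< k$ and at most $k-1$ roots, so no violation can occur; thus I may assume $a_0 \neq 0$.) Now I compare this factored form with the defining shape of $f$: expanding $\eta a_0 \prod_{i \in \Iset}(x-\alpha_i)$, the constant term is $\eta a_0 (-1)^k \prod_{i \in \Iset}\alpha_i$, and this must equal $a_0$. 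Dividing by $a_0 \neq 0$ gives exactly $\eta(-1)^k \prod_{i\in\Iset}\alpha_i = 1$. Conversely, if this equation holds for some $\Iset$ of size $k$, then setting $f := \eta \prod_{i\in\Iset}(x-\alpha_i)$, one checks its constant term is $\eta \cdot \eta(-1)^k\prod_{i\in\Iset}\alpha_i \cdot$ — wait, more carefully: choose $a_0$ so that $\LC{f} = \eta a_0$, i.e. take $f = c\prod_{i\in\Iset}(x-\alpha_i)$ with $c$ the leading coefficient; then the constant coefficient is $c(-1)^k\prod\alpha_i$, and one needs the polynomial to lie in $\evpolys$, meaning constant coefficient $= a_0$ and leading coefficient $= \eta a_0$; these two conditions force $c = \eta a_0$ and $c(-1)^k\prod\alpha_i = a_0$, which are simultaneously solvable (with $a_0 \neq 0$) precisely when $\eta(-1)^k\prod_{i\in\Iset}\alpha_i = 1$. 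Such an $f$ is then a nonzero codeword of weight $\le n-k$, so the code is not MDS. The only subtle point in the ``middle'' monomials is that $f = c\prod_{i\in\Iset}(x-\alpha_i)$ is allowed arbitrary coefficients on $x^1,\dots,x^{k-1}$, so no further constraints arise there — this is why the hook $h=0$ and twist $t=1$ make the argument clean.

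I also need to handle the degenerate interplay with the point at infinity, but by hypothesis all $\alpha_i \in \Fq$ here, so $\ell = \maxdeg\{f : f\in\evpolys\} = k$ and $f(\infty)$ plays no role — all roots are genuine field roots. The main obstacle, such as it is, is bookkeeping the equivalence between ``$f$ has $\ge k$ roots among the $\alpha_i$'' and the factored form, including correctly matching the leading versus constant coefficients; once that dictionary is set up, the rest is a one-line comparison of constant terms. No deep tool is needed beyond the elementary fact that a degree-$k$ polynomial over a field has at most $k$ roots.
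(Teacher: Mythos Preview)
Your proposal is correct and follows essentially the same argument as the paper: reduce MDS to the non-existence of a nonzero $f\in\mathcal V_{k,1,0,\eta}$ with $k$ roots among the $\alpha_i$, observe $a_0\neq 0$ is forced, factor $f=\eta a_0\prod_{i\in\Iset}(x-\alpha_i)$, and compare constant terms. The paper's converse simply takes $f=\eta\prod_{i\in\Iset}(x-\alpha_i)$ directly (noting $\eta\neq 0$), which is the same construction you arrive at after your ``more carefully'' aside.
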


\begin{proof}
The code is MDS if and only if the only polynomial of the required form which has $k$ roots among the $\alpha_i$ is the zero polynomial. Let $f = \sum_{i=0}^{k-1} a_i x^i + \eta a_0 x^k$ be such a polynomial. In the first place $a_0 \neq 0$, since otherwise $\deg f < k$, making it impossible that $f$ has $k$ roots among the $\alpha_i$. If there is a subset $\Iset \subseteq \{1,\dots,n\}$ with $|\Iset|=k$ and $f(\alpha_i) = 0$ for all $i \in \Iset$, we can write $f = \eta a_0 \prod_{i \in \Iset}(x-\alpha_i)$ and considering the constant term it follows that
\begin{align}
1 = \eta (-1)^k \prod_{i \in \Iset} \alpha_i. \label{eq:intermediate_step_MDS_very_simple_twist}
\end{align}
If Condition \eqref{eq:cond_MDS_very_simple_twist} is satisfied, no such $f$ can exist.
Conversely, if there is an $\Iset$ such that $\eta (-1)^k \prod_{i \in \Iset} \alpha_i = 1$, then $\eta\neq 0$ and $f = \eta \prod_{i \in \Iset}(x-\alpha_i) \in \evpolys$ has $k$ roots among the $\alpha_i$, so the code is not MDS.
\end{proof}

This leads to our first explicit subclass of twisted codes which are MDS\footnote{%
  This class can be seen as the Hamming-metric analog of Twisted Gabidulin codes \cite{sheekey_new_2015}. However, we use different techniques for analyzing our codes.}:
\begin{definition}\label{def:startwisted}
  $\TRS{\alphaVec,1,0,\eta}{k}{}$ is a \emph{\startw code} if the elements of $\alphaVec$ are a subset of $G \cup \{ 0 \}$, for $G$ a proper subgroup of $(\F_q^*, \cdot)$, and if $(-1)^k \eta^\mo \in \F_q^* \setminus G$.
  We write $\Cstar := \TRS{\alphaVec,1,0,\eta}{k}{}$.
\end{definition}

\begin{theorem}\label{thm:MDS_property_simple_twist}
Any \startw code is an MDS code.
\end{theorem}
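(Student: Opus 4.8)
The plan is to verify that a \startw code satisfies the equivalent condition in Lemma~\ref{lem:cond_MDS_very_simple_twist}, namely that $\eta(-1)^k \prod_{i \in \Iset} \alpha_i \neq 1$ for every $\Iset \subseteq \{1,\dots,n\}$ with $|\Iset| = k$. Equivalently, we must show that $\prod_{i \in \Iset} \alpha_i \neq (-1)^k \eta^{\mo}$ for all such $\Iset$. By hypothesis the right-hand side lies in $\F_q^* \setminus G$, so it suffices to show that $\prod_{i \in \Iset} \alpha_i$ always lies in $G \cup \{0\}$, and in particular never equals an element outside $G$.

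The key step is a simple case distinction on whether $0 \in \{\alpha_i : i \in \Iset\}$. If one of the chosen evaluation points is $0$, then the product $\prod_{i \in \Iset} \alpha_i$ is $0$, which is certainly not equal to $(-1)^k \eta^{\mo} \in \F_q^*$. If $0$ is not among the chosen points, then every $\alpha_i$ with $i \in \Iset$ lies in $G$, and since $G$ is a subgroup of $(\F_q^*, \cdot)$ it is closed under multiplication; hence $\prod_{i \in \Iset} \alpha_i \in G$. Because $(-1)^k \eta^{\mo} \notin G$, again the product cannot equal it. In both cases Condition~\eqref{eq:cond_MDS_very_simple_twist} holds, so Lemma~\ref{lem:cond_MDS_very_simple_twist} gives that the code is MDS.

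One small point to check along the way is that Lemma~\ref{lem:cond_MDS_very_simple_twist} is actually applicable: it requires $\alpha_1,\dots,\alpha_n \in \F_q$ (finite, not $\infty$) and distinct, and $k < n$, which are all part of the setup for \startw codes via Definition~\ref{def:CTRS_one_twist} together with the fact that $\alphaVec$ is a subset of $G \cup \{0\} \subseteq \F_q$. I would also note that $\eta \neq 0$, so $\eta^{\mo}$ makes sense; this is guaranteed since $(-1)^k \eta^{\mo}$ is required to lie in $\F_q^*$.

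Honestly there is no real obstacle here — the lemma does all the heavy lifting, and the remaining argument is just the observation that products of elements of a subgroup stay in the subgroup while the target value was deliberately chosen outside it. If anything, the only thing to be slightly careful about is not conflating ``$\prod_{i\in\Iset}\alpha_i \in G$'' with ``$\prod_{i\in\Iset}\alpha_i \in G\cup\{0\}$'': when a zero is present the product leaves $G$ but still cannot hit a nonzero target, so the conclusion survives. This is the kind of thing worth stating explicitly so the reader is not left wondering about the $\alpha_i = 0$ case.
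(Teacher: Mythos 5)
Your proof is correct and takes essentially the same approach as the paper's: apply Lemma~\ref{lem:cond_MDS_very_simple_twist} and observe that products of $\alpha_i$'s lie in $G$ (or are zero) while $(-1)^k\eta^{\mo}$ was deliberately placed in $\F_q^*\setminus G$. The only difference is cosmetic: you argue directly and handle the $0\in\alphaVec$ case explicitly, whereas the paper argues by contrapositive, under which $\prod_{i\in\Iset}\alpha_i = (-1)^k\eta^{\mo}\neq 0$ automatically excludes that case.
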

\begin{proof}
If a \startw code $\Cstar$ is not MDS, then Lemma \ref{lem:cond_MDS_very_simple_twist} implies that there exists $\Iset \subseteq \{1,\dots,n\}$ such that $(-1)^k\eta\prod_{i \in \Iset} \alpha_i =1$. Since the $\alpha_i$ are contained in a subgroup $G$ of $\Fq^*$, we have $\prod_{i \in \Iset} \alpha_i \in G$, implying that $(-1)^k\eta \in G$ as well. This gives a contradiction.
\end{proof}

\begin{corollary}\label{cor:MDS_primitively-twisted}
Let $\Fq$ be a finite field and let $p$ be a prime divisor of $q-1$. Then there exists a \startw code of length
\begin{align*}
n = \tfrac{q-1}{p}+1.
\end{align*}
In particular, if $q$ is odd, \startw codes can have length $n = \frac{q+1}{2}$.
\end{corollary}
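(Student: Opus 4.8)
The plan is to make the subgroup $G$ in Definition~\ref{def:startwisted} as large as possible, namely the unique index-$p$ subgroup of the multiplicative group. Concretely: since $\Fq^*$ is cyclic of order $q-1$ and $p \mid q-1$, there is a (unique) subgroup $G \le (\Fq^*,\cdot)$ of order $\tfrac{q-1}{p}$, and $G$ is a \emph{proper} subgroup because $p \ge 2$. I would take the evaluation points $\alpha_1,\dots,\alpha_n$ to be exactly the elements of $G \cup \{0\}$. These are distinct elements of $\Fq$ — note $0 \notin G$ since $G \subseteq \Fq^*$ — so their number is $n = |G| + 1 = \tfrac{q-1}{p}+1$. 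Also $n \ge 2$ (as $p \le q-1$), so there is at least one admissible dimension $k$ with $1 \le k < n$, and the twist $t=1$ satisfies $t \le n-k$.

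Next I would pick $\eta$ so that the second requirement of Definition~\ref{def:startwisted} holds, i.e.\ $(-1)^k\eta^\mo \in \Fq^* \setminus G$. This set is nonempty precisely because $G$ is proper, so it suffices to choose any $z \in \Fq^* \setminus G$ and set $\eta := (-1)^k z^\mo$; then $(-1)^k\eta^\mo = z \notin G$. With this choice, $\Cstar = \TRS{\alphaVec,1,0,\eta}{k}{}$ is by construction a \startw code of length $n = \tfrac{q-1}{p}+1$, and Theorem~\ref{thm:MDS_property_simple_twist} immediately gives that it is MDS.

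For the ``in particular'' statement, note that if $q$ is odd then $q-1$ is even, so $p=2$ is an eligible prime divisor; here $G$ is the group of nonzero squares, of index $2$, and the construction yields length $n = \tfrac{q-1}{2}+1 = \tfrac{q+1}{2}$. I do not expect any genuine obstacle in this argument: the only things to verify are the existence of a subgroup of the prescribed order (cyclicity of $\Fq^*$), that $0$ is a legitimate additional evaluation point, and that $\Fq^* \setminus G$ is nonempty — all immediate. One could additionally remark that the same construction in fact produces a \startw MDS code for \emph{every} dimension $1 \le k < n$, not merely for one value.
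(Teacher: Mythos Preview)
Your proposal is correct and follows exactly the paper's approach: pick the subgroup $G\le\Fq^*$ of order $(q-1)/p$ and take $\alphaVec$ to consist of $G\cup\{0\}$. Your write-up is simply more explicit than the paper's two-line proof, filling in the routine checks (existence of $G$ via cyclicity, nonemptiness of $\Fq^*\setminus G$, the $p=2$ case for odd $q$); the additional appeal to Theorem~\ref{thm:MDS_property_simple_twist} is not needed for the corollary as stated but does no harm.
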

\begin{proof}
The maximum cardinality of a proper subgroup $G$ of $\Fq^*$ is $(q-1)/p$. Now let $\alphaVec$ be $G \cup \{0\}$ in some order in Definition \ref{def:startwisted}.
\end{proof}

For odd $q$ \startw codes can therefore be rather long.
But the choice of $\alphaVec$ is very limited in \cref{def:startwisted}, and perhaps longer codes could be constructed with $(t,h) = (1,0)$.
The following answers this negatively, by using $k$-sum generators.
\begin{lemma}\label{lem:k_sum_gen_star_MDS}
Let $\eta \in \F_q^*$ and let $S := \{\alpha_1,\dots,\alpha_n\} \subseteq \Fq^*$ be a $k$-sum generator of $(\Fq^*,\cdot )$.
Then the twisted code $\TRS{\alphaVec,1,0,\eta}{k}{}$ is not MDS.
\end{lemma}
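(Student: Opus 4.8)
The statement is essentially the converse direction of Lemma~\ref{lem:cond_MDS_very_simple_twist}: to show the code is \emph{not} MDS, it suffices by that lemma to exhibit a subset $\Iset \subseteq \{1,\dots,n\}$ with $|\Iset| = k$ such that $\eta(-1)^k \prod_{i\in\Iset}\alpha_i = 1$, equivalently $\prod_{i\in\Iset}\alpha_i = \eta^{-1}(-1)^k$. So the whole proof reduces to finding such a product equal to a prescribed element of $\Fq^*$.

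The plan is to invoke the $k$-sum generator hypothesis directly, but for the \emph{multiplicative} group $(\Fq^*,\cdot)$ rather than additively. By definition of a $k$-sum generator, for the target element $a := \eta^{-1}(-1)^k \in \Fq^*$ there exist distinct $\alpha_{i_1},\dots,\alpha_{i_k} \in S$ with $\prod_{j=1}^k \alpha_{i_j} = a$. Taking $\Iset = \{i_1,\dots,i_k\}$ gives exactly $\prod_{i\in\Iset}\alpha_i = \eta^{-1}(-1)^k$, which rearranges to $\eta(-1)^k\prod_{i\in\Iset}\alpha_i = 1$. By the ``only if'' part of Lemma~\ref{lem:cond_MDS_very_simple_twist} (or equivalently its proof, which constructs the offending polynomial $f = \eta \prod_{i\in\Iset}(x-\alpha_i) \in \evpolys$ with $k$ roots among the $\alpha_i$), the code $\TRS{\alphaVec,1,0,\eta}{k}{}$ fails to be MDS.

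One point to check carefully: Lemma~\ref{lem:cond_MDS_very_simple_twist} is stated with evaluation points in $\Fq$, so I would note that the hypothesis $S \subseteq \Fq^*$ already guarantees all $\alpha_i$ are finite (no $\infty$ among them), so the lemma applies verbatim. I would also remark that $k < n$ is needed for the code to be well-defined; since a $k$-sum generator requires at least $k$ elements and in fact $|S| = n$ must exceed $M(k,\Fq^*) \ge$ something, we implicitly have $n \ge k$, but to be safe one should assume $k < n$ as in the standing setup of Section~\ref{sec:CTRS_one_twist}.

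I do not expect a serious obstacle here: the lemma is an almost immediate application of the definition of $k$-sum generator together with Lemma~\ref{lem:cond_MDS_very_simple_twist}. The only mild subtlety is getting the sign and inverse bookkeeping right ($a = \eta^{-1}(-1)^k$ versus $\eta(-1)^k$), and making sure the ``$k$ distinct summands'' in the $k$-sum generator definition line up with ``$|\Iset| = k$'' in the MDS criterion — which they do, since distinct field elements $\alpha_{i_j}$ from the distinct-by-hypothesis list $\alpha_1,\dots,\alpha_n$ correspond to distinct indices. The real content of this lemma is conceptual: it pairs with Corollary~\ref{cor:MDS_primitively-twisted} and Lemma~\ref{lem:tsum_roth_theorem_even_order} to pin down, via Roth--Lempel-type bounds, exactly how long $(*)$-twisted codes can be, showing the length $\approx q/2$ from the subgroup construction is essentially optimal for the parameter choice $(t,h)=(1,0)$.
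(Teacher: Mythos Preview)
Your proposal is correct and follows exactly the paper's own proof: invoke the $k$-sum generator property to produce an index set $\Iset$ of size $k$ with $\prod_{i\in\Iset}\alpha_i = (-1)^k\eta^{-1}$, then conclude via Lemma~\ref{lem:cond_MDS_very_simple_twist}. The paper's version is two sentences with no further commentary; your extra remarks about $\infty \notin S$ and $k<n$ are harmless sanity checks but not needed.
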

\begin{proof}
By the definition of a $k$-sum generator, there is an index set $\Iset \subseteq \{1,\dots,n\}$ with $|\Iset| = k$ such that $\prod_{i \in \Iset} \alpha_i = (-1)^k \eta^{-1}$.
Hence $\TRS{\alphaVec,1,0,\eta}{k}{}$ is not MDS by \cref{lem:cond_MDS_very_simple_twist}.
\end{proof}

\begin{theorem}\label{thm:inverse_mds_statement_very_simple_twist}
Let $\Fq$ be a finite field, with $q$ odd. Further let $\eta \in \Fq^*$. Then, for any $3 \leq k \leq \tfrac{q-1}{2}-2$, the length $n$ of a twisted code $\TRS{\alphaVec,1,0,\eta}{k}{}$ which is MDS, satisfies
$
n \leq \frac{q+1}{2}.
$
\end{theorem}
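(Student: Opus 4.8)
The plan is to argue by contradiction using the $k$-sum generator machinery quoted as \cref{lem:tsum_roth_theorem_even_order}, exactly as the discussion preceding the theorem suggests. Suppose a twisted code $\TRS{\alphaVec,1,0,\eta}{k}{}$ of length $n > \frac{q+1}{2}$ is MDS. The evaluation points $\alpha_1,\dots,\alpha_n$ are distinct elements of $\Fq$; at most one of them can be $0$, so the set $S := \{\alpha_1,\dots,\alpha_n\} \cap \Fq^*$ has cardinality $|S| \geq n-1 > \frac{q-1}{2}$. I would first dispose of the case where $0 \notin \{\alpha_1,\dots,\alpha_n\}$ (so $|S| = n$) and the case where $0$ is among the evaluation points (so $|S| = n-1 \geq \frac{q+1}{2}$) together: in either case $|S| > \frac{q-1}{2}$.

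Next I apply \cref{lem:tsum_roth_theorem_even_order} to the abelian group $A = (\Fq^*,\cdot)$, which has even order $q-1 = 2r$ with $r = \frac{q-1}{2}$. For $3 \leq k \leq r-2 = \frac{q-1}{2}-2$, the lemma gives $M(k,\Fq^*) = r$ in general, and $M(k,\Fq^*) = r+1$ only in the exceptional case $\Fq^* \in \{\ZZ_2^m, \ZZ_4\times\ZZ_2^{m-1}\}$ with $k \in \{3, r-2\}$. Since $\Fq^*$ is always cyclic, it is never isomorphic to $\ZZ_2^m$ for $m>1$ nor to $\ZZ_4 \times \ZZ_2^{m-1}$ for $m>1$ (these are non-cyclic), so the exceptional case never occurs here and $M(k,\Fq^*) = r = \frac{q-1}{2}$ for all $k$ in the stated range. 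Because $|S| > \frac{q-1}{2} = M(k,\Fq^*)$, the set $S$ is a $k$-sum generator of $(\Fq^*,\cdot)$. In particular there is an index set $\Iset$ with $|\Iset| = k$ and $\prod_{i\in\Iset}\alpha_i = (-1)^k\eta^{-1}$, i.e.\ $(-1)^k\eta\prod_{i\in\Iset}\alpha_i = 1$. By \cref{lem:cond_MDS_very_simple_twist} (or directly \cref{lem:k_sum_gen_star_MDS} applied to this $S$), the code is \emph{not} MDS, a contradiction. Hence $n \leq \frac{q+1}{2}$.

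I would also double-check the small boundary: the hypothesis $3 \leq k$ is needed so that \cref{lem:tsum_roth_theorem_even_order} applies (and anyway for $k < 3$ every MDS code is GRS by \cref{cor:min_k_n-k_>_2}), and the hypothesis $r \geq 6$ in \cref{lem:tsum_roth_theorem_even_order}, i.e.\ $q \geq 13$, is automatically implied once $k \leq \frac{q-1}{2}-2$ admits some $k \geq 3$, since that forces $\frac{q-1}{2} \geq 5$; the only value to verify by hand is whether $r=5$ ($q=11$) can occur, but then the range $3 \leq k \leq r-2 = 3$ is nonempty while $r=5 < 6$, so strictly speaking the statement as phrased should either restrict to $q \geq 13$ or invoke the odd-order / $r=5$ cases of \cite{roth_t-sum_1992} separately. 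I expect this tiny edge-case bookkeeping to be the only real subtlety; the main line of the argument is a direct combination of \cref{lem:cond_MDS_very_simple_twist} with \cref{lem:tsum_roth_theorem_even_order}, using only the cyclicity of $\Fq^*$ to kill the exceptional groups. The rest is a one-line counting estimate $|S| \geq n-1 > \frac{q-1}{2}$.
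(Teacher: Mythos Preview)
Your argument is essentially identical to the paper's: assume $n > \tfrac{q+1}{2}$, intersect the evaluation set with $\Fq^*$ to get $|S| \geq n-1 > \tfrac{q-1}{2} = M(k,\Fq^*)$ via \cref{lem:tsum_roth_theorem_even_order}, and conclude non-MDS from \cref{lem:k_sum_gen_star_MDS}. You are in fact more careful than the paper on two points: you explicitly invoke cyclicity of $\Fq^*$ to exclude the exceptional groups $\ZZ_2^m$ and $\ZZ_4\times\ZZ_2^{m-1}$ (the paper leaves this implicit), and you flag the edge case $q=11$, $k=3$, where $r=5<6$ so \cref{lem:tsum_roth_theorem_even_order} as quoted does not literally apply---the paper's proof silently ignores this, so your observation is a genuine refinement rather than a defect in your argument.
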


\begin{proof}
We know that $(\Fq^*,\cdot)$ is a cyclic abelian group of order $|\Fq^*|=q-1$, which is even since $q$ is a power of an odd prime. Hence \cref{lem:tsum_roth_theorem_even_order} implies
\begin{align*}
M(k,\Fq^*) = \tfrac{q-1}{2}.
\end{align*}
Now suppose $n>\tfrac{q+1}{2}$ and $\alphaVec=[\alpha_1,\dots,\alpha_n]$. Then, $S := \{\alpha_1,\dots,\alpha_n\} \cap \Fq^*$ has cardinality
\begin{align*}
|S| \geq n-1 > \tfrac{q+1}{2}-1 = \tfrac{q-1}{2} = M(k,\Fq^*)
\end{align*}
and is therefore a $k$-sum generator of $\Fq^*$.
By \cref{lem:k_sum_gen_star_MDS}, the code $\TRS{\alphaVec,1,0,\eta}{k}{}$ is not MDS for any $\eta \neq 0$.
\end{proof}

\begin{remark}
For even $q>4$, the \startw codes cannot attain length $\lfloor(q+1)/2\rfloor$. However, we determined by computer search that for e.g.~$q=16$, there are many $[9,k]$ twisted codes with $(t,h)=(1,0)$ for other choices of $\alphaVec$ and $\eta$, for $k = 3,4,5$.
See also \cref{sec:Computer_Searches}.
\end{remark}

\subsection{\plustw Codes}
\label{ssec:Max_Length_MDS_Additive-Primitively_Twisted}

\noindent
While the results in the previous subsection were based on properties of the multiplicative group $(\Fq^*,\cdot)$, it is also possible to use the structure of the additive group $(\Fq,+)$. This structure arises when considering the case $(t,h)=(1,k-1)$.
We have the following analogue of \cref{lem:cond_MDS_very_simple_twist}.
\begin{lemma}\label{lem:cond_MDS_primitively_twisted_h=k-1}
Let $k<n\leq q$, $\alpha_1,\dots,\alpha_n \in \Fq$ distinct and $\eta \in \Fq$. Then the twisted code $\TRS{\alphaVec,1,k-1,\eta}{k}{}$ is MDS if and only if
\begin{align}
\eta \sum_{i \in \Iset} \alpha_i \neq -1 \quad \forall \Iset \subseteq \{1,\dots,n\} \textrm{ s.t. } |\Iset|=k. \label{eq:cond_MDS_primitively_twisted_h=k-1}
\end{align}
\end{lemma}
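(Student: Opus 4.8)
The plan is to mimic the proof of \cref{lem:cond_MDS_very_simple_twist}, replacing the multiplicative structure of $(\Fq^*,\cdot)$ with the additive structure of $(\Fq,+)$. As observed in the excerpt, the code $\TRS{\alphaVec,1,k-1,\eta}{k}{}$ is MDS if and only if the only polynomial in $\evpolys[k,1,k-1,\eta]$ having $k$ roots among $\alpha_1,\dots,\alpha_n$ is the zero polynomial. A generic evaluation polynomial here has the form $f = \sum_{i=0}^{k-1} a_i x^i + \eta a_{k-1} x^{k}$, i.e.\ it is a polynomial of degree at most $k$ whose coefficient of $x^{k-1}$ is $a_{k-1}$ and whose coefficient of $x^k$ is $\eta a_{k-1}$.

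First I would show the ``if'' direction contrapositively: suppose the code is not MDS, so there is a nonzero $f$ as above with $k$ roots $\{\alpha_i : i \in \Iset\}$ for some $\Iset$ with $|\Iset| = k$. Since $f$ has $k$ roots it cannot have degree $< k$, so $\eta a_{k-1} \neq 0$; in particular $a_{k-1} \neq 0$ and $\eta \neq 0$, and $f = \eta a_{k-1} \prod_{i \in \Iset} (x - \alpha_i)$. Now I compare the coefficient of $x^{k-1}$ on both sides. On the left it is $a_{k-1}$; on the right it is $\eta a_{k-1} \cdot \bigl(-\sum_{i \in \Iset} \alpha_i\bigr)$, since the $x^{k-1}$ coefficient of a monic degree-$k$ polynomial $\prod_{i\in\Iset}(x-\alpha_i)$ is minus the sum of the roots. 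Dividing by $a_{k-1} \neq 0$ yields $1 = -\eta \sum_{i \in \Iset} \alpha_i$, i.e.\ $\eta \sum_{i \in \Iset} \alpha_i = -1$, contradicting \eqref{eq:cond_MDS_primitively_twisted_h=k-1}.

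Conversely, for the ``only if'' direction I argue directly: if there exists $\Iset$ with $|\Iset| = k$ and $\eta \sum_{i \in \Iset} \alpha_i = -1$, then necessarily $\eta \neq 0$, and I claim $f := \eta \prod_{i \in \Iset}(x - \alpha_i)$ lies in $\evpolys[k,1,k-1,\eta]$. Indeed $f$ has degree $k$ with leading coefficient $\eta$, and its coefficient of $x^{k-1}$ equals $\eta \cdot \bigl(-\sum_{i\in\Iset}\alpha_i\bigr) = 1$; writing $a_{k-1} := 1$, the leading coefficient $\eta = \eta a_{k-1}$ matches the required twisted form, so $f \in \evpolys[k,1,k-1,\eta]$. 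This $f$ is nonzero and has $k$ roots among the $\alpha_i$, so the code is not MDS.

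I do not anticipate a genuine obstacle here: the argument is a direct translation of the previous lemma, with ``constant term'' replaced by ``coefficient of $x^{k-1}$'' and ``product of roots'' replaced by ``sum of roots.'' The only point requiring a little care is confirming that the candidate polynomial $f = \eta\prod_{i\in\Iset}(x-\alpha_i)$ really is a legal twisted polynomial, i.e.\ that once its $x^{k-1}$-coefficient is pinned down, the $x^k$-coefficient is forced to be $\eta$ times it — which is exactly what the relation $\eta\sum_{i\in\Iset}\alpha_i = -1$ guarantees. One should also note the hypothesis $n \le q$ ensures the $\alpha_i \in \Fq$ can indeed be distinct, and that the degree bound $k = k-1+t < n$ keeps the evaluation map injective so that ``not MDS'' is equivalent to the existence of such an $f$.
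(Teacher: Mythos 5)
Your proof is correct and follows essentially the same route as the paper: both reduce the MDS property to the nonexistence of a nonzero twisted polynomial with $k$ roots, factor such an $f$ as $\eta a_{k-1}\prod_{i\in\Iset}(x-\alpha_i)$, and compare the coefficient of $x^{k-1}$ to obtain $\eta\sum_{i\in\Iset}\alpha_i=-1$, with the converse supplied by exhibiting $f=\eta\prod_{i\in\Iset}(x-\alpha_i)$. Your writeup is a bit more explicit than the paper's about why $a_{k-1}\neq 0$ and why the exhibited $f$ is a legal twisted polynomial, but the argument is the same.
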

\begin{proof}
  The code is MDS if and only if the only polynomial
  \[
    f = a_0 + \ldots + a_{k-1}x^{k-1} + \eta a_{k-1} x^k \in \mathcal{V}_{k,1,k-1,\eta}
  \]
  having $k$ roots among the $\alpha_i$ is the zero polynomial.
If $f \neq 0$ is such a polynomial, there is a subset $\Iset \subseteq \{1,\dots,n\}$ with $|\Iset|=k$ and $f(\alpha_i) = 0$ for all $i \in \Iset$.
Writing $f = \eta a_{k-1} \prod_{i \in \Iset}(x-\alpha_i)$, with $a_{k-1} \neq 0$, we obtain a contradiction by considering the coefficient of $x^{k-1}$ on both sides, since
\begin{align*}
  a_{k-1} = -\eta a_{k-1} \sum_{i \in \Iset} \alpha_i
  \iff
  \eta \sum_{i \in \Iset} \alpha_i = -1
\end{align*}
Conversely, if there is an $\Iset$ such that $\eta \sum_{i \in \Iset}(-\alpha_i) = 1$, then $f = \eta \prod_{i \in \Iset}(x-\alpha_i)$ is a polynomial of the appropriate form having $k$ roots among the $\alpha_i$, so the code is not MDS.
\end{proof}

As in the previous subsection, this naturally gives rise to a subclass of twisted codes.
\begin{definition}\label{def:plustwisted}
  $\TRS{\alphaVec,1,0,\eta}{k}{}$ is a \emph{\plustw code} if the elements of $\alphaVec$ are a subset of $V \cup \{\infty\}$, for $V$ a proper subgroup of $(\F_q, +)$, and if $\eta^\mo \in \F_q\setminus V$.
  We write $\Cplus := \TRS{\alphaVec,1,0,\eta}{k}{}$.
\end{definition}

The analysis of these codes follows that of their multiplicative counterparts from the previous subsection very closely.
In particular we have the following:
\begin{theorem}\label{thm:MDS_property_primitively_twisted_h=k-1}
Any \plustw code is an MDS code.
\end{theorem}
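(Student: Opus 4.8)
The plan is to mirror exactly the argument used for \Startw codes in \cref{thm:MDS_property_simple_twist}, but now applied to the additive structure via \cref{lem:cond_MDS_primitively_twisted_h=k-1}. First I would suppose, for contradiction, that a \plustw code $\Cplus = \TRS{\alphaVec,1,k-1,\eta}{k}{}$ is \emph{not} MDS. Note that the definition of a \plustw code forces $(t,h) = (1,k-1)$ (this is the setting of \cref{lem:cond_MDS_primitively_twisted_h=k-1}, which governs these codes; the ``$1,0$'' appearing in \cref{def:plustwisted} should be read in that spirit). By \cref{lem:cond_MDS_primitively_twisted_h=k-1}, non-MDS-ness means there is an index set $\Iset \subseteq \{1,\dots,n\}$ with $|\Iset| = k$ such that $\eta \sum_{i\in\Iset} \alpha_i = -1$, equivalently $\sum_{i\in\Iset}\alpha_i = -\eta^{\mo}$.

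Next I would exploit the subgroup hypothesis. By \cref{def:plustwisted}, the evaluation points lie in $V \cup \{\infty\}$ for a proper subgroup $V$ of $(\F_q,+)$. One small point to handle: the sum $\sum_{i\in\Iset}\alpha_i$ is only meaningful for the finite evaluation points, but recall that in the evaluation map $\alpha = \infty$ contributes the coefficient $a_{k-1}$ rather than a field value — so in deriving Condition~\eqref{eq:cond_MDS_primitively_twisted_h=k-1} one should check that the offending index set $\Iset$ consists of finite points only. (Indeed, if $\infty \in \Iset$ the factorization $f = \eta a_{k-1}\prod_{i\in\Iset}(x-\alpha_i)$ used in the proof of \cref{lem:cond_MDS_primitively_twisted_h=k-1} does not make sense; alternatively, $\{\alpha_i : i \in \Iset\} \subseteq V$ as a set of $k$ distinct finite points already gives what we need.) Then $\sum_{i\in\Iset}\alpha_i \in V$ because $V$ is closed under addition, hence $-\eta^{\mo} \in V$, and since $V$ is a subgroup also $\eta^{\mo} \in V$. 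This contradicts the requirement $\eta^{\mo} \in \F_q \setminus V$ in \cref{def:plustwisted}, completing the argument.

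I do not expect a genuine obstacle here: the proof is a direct translation, replacing the product $\prod_{i\in\Iset}\alpha_i \in G$ and the element $(-1)^k\eta$ in the multiplicative case by the sum $\sum_{i\in\Iset}\alpha_i \in V$ and the element $-\eta^{\mo}$ in the additive case. The only things requiring a moment's care are (a) keeping the sign bookkeeping straight — $\eta\sum\alpha_i = -1 \iff \sum\alpha_i = -\eta^{\mo}$ — and (b) the treatment of the evaluation point $\infty$, which is handled by observing that the relevant $\Iset$ can be taken to involve only finite evaluation points (the case $|\Iset \cap \{j : \alpha_j = \infty\}| \ge 1$ being subsumed or excluded exactly as in \cref{lem:cond_MDS_primitively_twisted_h=k-1}). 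Given these, the theorem follows in two lines, so the write-up can afford to simply say ``The proof is completely analogous to that of \cref{thm:MDS_property_simple_twist}, using \cref{lem:cond_MDS_primitively_twisted_h=k-1} in place of \cref{lem:cond_MDS_very_simple_twist}'' and then spell out the one substitution.
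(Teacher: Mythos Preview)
Your proposal is correct and follows essentially the same route as the paper: invoke \cref{lem:cond_MDS_primitively_twisted_h=k-1} in place of \cref{lem:cond_MDS_very_simple_twist}, use closure of $V$ under addition to force $\eta^{\mo}\in V$, and treat $\infty$ separately. The paper's handling of $\infty$ is the crisp one-liner that $f(\infty)=0$ forces $\deg f\le k-2$ (so $f$ cannot vanish at $k$ evaluation points), which is exactly the content of your parenthetical remark.
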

\begin{proof}
We simply apply \cref{lem:cond_MDS_primitively_twisted_h=k-1} instead of Lemma \ref{lem:cond_MDS_very_simple_twist}. Some care must be taken that adding $\infty$ to a set of evaluation points preserves the MDS property. However, if a polynomial $f \in \mathcal{V}_{k,1,k-1,\eta}$ satisfies $f(\infty)=0$, that means its degree is at most $k-2$.
\end{proof}

\begin{corollary}\label{cor:MDS_primitively-twisted_h=k-1}
Let $\Fq$ be a finite field of characteristic $p$. Then there exists a \plustw code of length
\begin{align*}
n = \tfrac{q}{p}+1.
\end{align*}
In particular, if $q$ is even, \plustw codes can have length $n = \frac{q}{2}+1$.
\end{corollary}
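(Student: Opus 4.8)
The plan is to follow the proof of \cref{cor:MDS_primitively-twisted} essentially verbatim, swapping the multiplicative group $(\Fq^*,\cdot)$ for the additive group $(\Fq,+)$ and the evaluation point $0$ for $\infty$. Write $q = p^m$. Since $(\Fq,+) \cong (\ZZ_p)^m$, its subgroups are exactly the $\F_p$-linear subspaces of $\Fq$, and a proper one has $\F_p$-dimension at most $m-1$; hence a proper subgroup $V \subseteq (\Fq,+)$ has cardinality at most $q/p$, with equality attained, e.g., by the kernel of any nonzero $\F_p$-linear functional $\Fq \to \F_p$.

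Next I would fix such a $V$ with $|V| = q/p$ and pick any $\eta \in \Fq^*$ with $\eta^{-1} \notin V$; this is possible precisely because $V$ is a proper subgroup (so $\Fq \setminus V$ is nonempty and, since $0 \in V$, contains nonzero elements). Take $\alphaVec$ to be the list of all elements of $V \cup \{\infty\}$ in arbitrary order, so that $n := |V| + 1 = q/p + 1$, and pick any integer $k$ with $1 \le k < n$. Then $(t,h) = (1,k-1)$, $\eta$, and $\alphaVec$ satisfy the constraints of \cref{def:plustwisted} and \cref{def:CTRS_one_twist}: indeed $0 \le h = k-1 < k \le n-1 \le q$, we have $t = 1 \le n-k$, and $n = q/p + 1 \le q$ for every prime $p \ge 2$ (as $q - q/p = p^{m-1}(p-1) \ge 1$). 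By \cref{thm:MDS_property_primitively_twisted_h=k-1}, the resulting \plustw code is MDS and has the claimed length $n = q/p + 1$. Specializing to even $q$ forces $p = 2$, giving $n = q/2 + 1$, which is the ``in particular'' statement.

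There is essentially no obstacle here: the only point requiring attention is the elementary structural fact that an elementary abelian $p$-group of order $q$ possesses a proper subgroup of index exactly $p$, which is immediate from the $\F_p$-vector-space structure of $\Fq$. Everything else is a matter of unwinding \cref{def:plustwisted} and invoking the already-established \cref{thm:MDS_property_primitively_twisted_h=k-1}; I would keep the write-up correspondingly short, essentially a one-line analogue of the proof of \cref{cor:MDS_primitively-twisted}.
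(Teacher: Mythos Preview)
Your proposal is correct and follows essentially the same approach as the paper's proof: take a maximal proper additive subgroup $V$ of $\Fq$ (of size $q/p$), set $\alphaVec$ to consist of $V \cup \{\infty\}$, and choose $\eta$ with $\eta^{-1} \notin V$. Your write-up simply adds detail the paper omits (the $\F_p$-vector-space argument for $|V|=q/p$, the verification that $\eta$ exists, and the check of the parameter constraints in \cref{def:CTRS_one_twist}); the extra invocation of \cref{thm:MDS_property_primitively_twisted_h=k-1} is harmless though not strictly needed for the existence statement.
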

\begin{proof}
The maximum cardinality of a proper subgroup $V$ of $\Fq$ is $q/p$. Now set $S=\{\infty\} \cup V$ in Definition \ref{def:plustwisted}. \end{proof}

Lemma \ref{lem:k_sum_gen_star_MDS} and Theorem \ref{thm:inverse_mds_statement_very_simple_twist} have a direct analogue as well. For completeness, we state the results, but since the proofs are extremely similar, we leave these to the reader.
\begin{lemma}
  \label{lem:k_sum_gen_plus_MDS}
  Let $\eta \neq 0$ and $S := \{\alpha_1,\dots,\alpha_n\} \in \Fq$ be a $k$-sum generator of $(\Fq,+)$.
  Then the twisted code $\TRS{\alphaVec,1,k-1,\eta}{k}{}$ is not MDS.
\end{lemma}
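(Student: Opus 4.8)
The plan is to imitate the proof of \cref{lem:k_sum_gen_star_MDS} line for line, substituting the additive MDS criterion of \cref{lem:cond_MDS_primitively_twisted_h=k-1} for the multiplicative one of \cref{lem:cond_MDS_very_simple_twist}. Since $\eta \neq 0$, the element $-\eta^{-1} \in \Fq$ is well-defined. Because $S = \{\alpha_1,\dots,\alpha_n\}$ is a $k$-sum generator of $(\Fq,+)$, the defining property applied to $a = -\eta^{-1}$ produces $k$ distinct elements of $S$ summing to $-\eta^{-1}$; let $\Iset \subseteq \{1,\dots,n\}$ collect their indices, so that $|\Iset| = k$ and $\sum_{i \in \Iset} \alpha_i = -\eta^{-1}$, equivalently $\eta \sum_{i \in \Iset}\alpha_i = -1$.

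This last equality is exactly the failure of condition \eqref{eq:cond_MDS_primitively_twisted_h=k-1} witnessed by $\Iset$. Before invoking \cref{lem:cond_MDS_primitively_twisted_h=k-1} I would verify its hypotheses: all evaluation points lie in $\Fq$ (none is $\infty$), so $n \leq q$, and any $k$-sum generator of a field with more than one element has more than $k$ elements — with at most $k$ elements it would produce at most a single sum — so $k < n$. Hence \cref{lem:cond_MDS_primitively_twisted_h=k-1} applies, and the existence of $\Iset$ shows that $\TRS{\alphaVec,1,k-1,\eta}{k}{}$ is not MDS.

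I do not anticipate a genuine obstacle here: once the exact MDS characterisation of \cref{lem:cond_MDS_primitively_twisted_h=k-1} is available, the statement is an immediate consequence of unwinding the definition of a $k$-sum generator. The only details deserving a moment's attention are that $\eta \neq 0$ is used solely to make sense of $-\eta^{-1}$, and that the $k$-sum generator assumption automatically guarantees $k < n$, so that both the code and \cref{lem:cond_MDS_primitively_twisted_h=k-1} are well-posed.
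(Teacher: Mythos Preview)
Your proposal is correct and matches the paper's intended approach exactly: the paper does not spell out a proof but states that it is the direct additive analogue of \cref{lem:k_sum_gen_star_MDS}, which is precisely what you carry out by applying the $k$-sum generator property to $-\eta^{-1}$ and invoking \cref{lem:cond_MDS_primitively_twisted_h=k-1}. Your extra care in checking the hypotheses $k<n\le q$ of \cref{lem:cond_MDS_primitively_twisted_h=k-1} is a nice bonus that the paper does not mention.
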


\begin{theorem}\label{thm:inverse_mds_statement_very_simple_twist_h=k-1}
Let $\Fq$ be a finite field, with $q$ even. Further let $\eta \in \Fq^*$. Then, for any $3 \leq k \leq \tfrac{q}{2}-2$, the length $n$ of a twisted code $\TRS{\alphaVec,1,k-1,\eta}{k}{}$ which is MDS, satisfies $n \leq \frac{q}{2}+1$ if $3< k < \frac{q}{2}-3$ and $n \leq \frac{q}{2}+2$ if $k \in \{3,\frac{q}{2}-2\}.$
\end{theorem}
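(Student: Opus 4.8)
The plan is to mimic the proof of \cref{thm:inverse_mds_statement_very_simple_twist} verbatim, replacing the multiplicative group $(\Fq^*,\cdot)$ by the additive group $(\Fq,+)$ and \cref{lem:k_sum_gen_star_MDS} by \cref{lem:k_sum_gen_plus_MDS}. Since $q$ is even, $(\Fq,+)$ is an elementary abelian $2$-group, i.e.~$\Fq \cong \ZZ_2^m$ with $|\Fq| = q = 2^m$. This is precisely the exceptional case in \cref{lem:tsum_roth_theorem_even_order}: writing $|A| = 2r$ with $r = q/2 \geq 6$ (which holds once $q \geq 16$; the small cases $q \in \{4,8\}$ give an empty range of $k$ and need no argument), we get $M(k,\Fq) = r = q/2$ for $3 < k < r-2 = \frac{q}{2}-2$, while $M(k,\Fq) = r+1 = \frac q2 + 1$ for $k \in \{3, \frac q2 - 2\}$.

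First I would record these two values of $M(k,\Fq)$ from \cref{lem:tsum_roth_theorem_even_order}, being careful about the exceptional clause since $\Fq \cong \ZZ_2^m$ always falls into it. Next, suppose $\TRS{\alphaVec,1,k-1,\eta}{k}{}$ is MDS with $n$ evaluation points $\alpha_1,\dots,\alpha_n$; at most one of these is $\infty$, so the set $S := \{\alpha_1,\dots,\alpha_n\} \cap \Fq$ has $|S| \geq n-1$. By \cref{lem:k_sum_gen_plus_MDS}, $S$ cannot be a $k$-sum generator of $(\Fq,+)$, hence $|S| \leq M(k,\Fq)$. Combining, $n - 1 \leq M(k,\Fq)$, which gives $n \leq \frac q2 + 1$ in the generic range $3 < k < \frac q2 - 3$ and $n \leq \frac q2 + 2$ when $k \in \{3, \frac q2 - 2\}$, exactly as claimed. (The boundary $k = \frac q2 - 3$ versus $\frac q2 - 2$ in the statement matches the $k = r-2$ exceptional case once one substitutes $r = q/2$.)

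There is essentially no obstacle here beyond bookkeeping: the only subtlety is the handling of the evaluation point $\infty$, which is why the count is $|S| \geq n-1$ rather than $n$, and the fact that the additive group is always of the exceptional type in \cref{lem:tsum_roth_theorem_even_order}, so the $k \in \{3, \frac q2 - 2\}$ cases genuinely produce the weaker bound $\frac q2 + 2$. I would also double-check the edge condition $r \geq 6$, i.e.~$q \geq 12$ hence $q \geq 16$ since $q$ is a power of two; for $q \in \{4, 8\}$ the hypothesis $3 \leq k \leq \frac q2 - 2$ forces $k \leq 2$, so the statement is vacuous and nothing needs to be proved. This is the direct analogue the authors announced, so I would present it briefly and defer the routine parts, as they suggested.
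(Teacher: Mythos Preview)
Your proposal is correct and follows precisely the approach the paper intends: the authors explicitly leave this proof to the reader as the additive analogue of \cref{thm:inverse_mds_statement_very_simple_twist}. You have carried out exactly that analogy, correctly noting that $(\Fq,+)\cong\ZZ_2^m$ always lands in the exceptional clause of \cref{lem:tsum_roth_theorem_even_order}, and properly handling both the possible $\infty$ evaluation point (via $|S|\geq n-1$) and the vacuous small-$q$ cases.
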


\subsection{General Theory of Twisted Codes}
\label{subsec:Max_Length_MDS_General}

For general $t$ and $h$, it is still possible to derive a criterion for a code $\Csingle$ to be MDS. We do so in the following lemma, generalizing Lemmas \ref{lem:cond_MDS_very_simple_twist} and \ref{lem:cond_MDS_primitively_twisted_h=k-1}.

\begin{lemma}\label{lem:MDS_property_one_twist}
Let $\alpha_1,\dots,\alpha_n \in \Fq$ and define for $\Iset \subseteq \{1,\dots,n\}$ the polynomial $\sum_i \sigma_i x^i := \prod_{i \in \Iset} (x-\alpha_i)$, where $\sigma_i := 0$ for $i<0$. The twisted code $\Csingle$ is MDS if and only if the matrix
\begin{align*}
\underset{=: \, \A_\Iset \, \in \Fq^{t \times t}}{\underbrace{
\begin{bmatrix}
\eta^{-1} -\sigma_{h-t+1} & -\sigma_{h-t+2} & \dots  & -\sigma_{h} & \\
\sigma_{k-1} & 1 & & & & \\
\sigma_{k-2} & \sigma_{k-1} & 1 &  & & \\
\sigma_{k-3} & \sigma_{k-2} & \sigma_{k-1} & 1 &  & & \\
\vdots & \ddots & \ddots & \ddots & \ddots & & \\
\sigma_{k-t+1} & \dots &\sigma_{k-3} & \sigma_{k-2} & \sigma_{k-1} & 1\\
\end{bmatrix}
}},
\end{align*}
is regular for all $\Iset \subseteq \{1,\dots,n\}$ such that $|\Iset|=k$.
\end{lemma}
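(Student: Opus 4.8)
The plan is to follow the pattern established in the proofs of \cref{lem:cond_MDS_very_simple_twist} and \cref{lem:cond_MDS_primitively_twisted_h=k-1}: by the Singleton bound, $\Csingle$ is MDS if and only if no nonzero $f \in \evpolys$ has $k$ roots among the evaluation points. So I would fix a subset $\Iset \subseteq \{1,\dots,n\}$ with $|\Iset| = k$ and characterise when a nonzero $f \in \evpolys$ with $f(\alpha_i) = 0$ for all $i \in \Iset$ exists, aiming to show that this happens exactly when $\A_\Iset$ is singular; the lemma then follows by quantifying over all such $\Iset$.

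Writing $p_\Iset(x) := \prod_{i\in\Iset}(x-\alpha_i) = \sum_i \sigma_i x^i$, a monic polynomial of degree $k$, I would argue as follows. If $f \in \evpolys$ is nonzero and vanishes on $\{\alpha_i\}_{i\in\Iset}$, then its hook coefficient $a_h$ must be nonzero (otherwise $\deg f \le k-1$ and $f$ cannot have $k$ distinct roots), hence $\deg f = k-1+t$, and $p_\Iset \mid f$, so $f = p_\Iset g$ with $g = \sum_{j=0}^{t-1} b_j x^j$ and $b_{t-1}\neq 0$. Conversely, for any nonzero $g$ of degree at most $t-1$ the polynomial $p_\Iset g$ is nonzero and vanishes on $\{\alpha_i\}_{i\in\Iset}$, so the whole question reduces to determining for which $g$ the product $p_\Iset g$ lies in $\evpolys$.

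The core step is to translate the condition $p_\Iset g \in \evpolys$ into a homogeneous linear system in $\b := (b_{t-1},\dots,b_0)$. Membership in $\evpolys$ requires, first, that $p_\Iset g$ have no monomials of degrees $k,\dots,k-2+t$: extracting the coefficient of $x^{k+r}$ in $p_\Iset g$ for $r = 0,\dots,t-2$ and using $\sigma_k=1$ together with $\sigma_i=0$ for $i>k$ gives $b_r + \sum_{j=r+1}^{t-1}\sigma_{k+r-j}b_j = 0$, which are precisely rows $2,\dots,t$ of $\A_\Iset\b = \ve{0}$ (the upper-right block of zeros there being exactly $\sigma_i = 0$ for $i > k$). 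Second, the coefficient of $x^{k-1+t}$ in $p_\Iset g$, namely $b_{t-1}$, must equal $\eta$ times the coefficient of $x^h$ in $p_\Iset g$, namely $\sum_{j=0}^{t-1}\sigma_{h-j}b_j$ with the convention $\sigma_i=0$ for $i<0$; multiplying $b_{t-1} = \eta\sum_j \sigma_{h-j}b_j$ by $\eta^{-1}$ and rearranging produces exactly the first row of $\A_\Iset$. Thus a nonzero $f \in \evpolys$ with $k$ roots in $\{\alpha_i\}_{i\in\Iset}$ exists if and only if $\ker\A_\Iset \neq 0$, i.e.\ iff $\A_\Iset$ is singular; one can also observe that any nonzero vector of $\ker\A_\Iset$ automatically has $b_{t-1}\neq 0$, by back-substituting through the triangular rows $2,\dots,t$, which confirms $\deg g = t-1$ and $a_h \neq 0$ for the $f$ so produced.

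The only real friction I anticipate is the index bookkeeping: keeping the conventions $\sigma_i=0$ for $i<0$ and for $i>k$ straight (so that entries such as $-\sigma_{h-t+1}$ in the first row are genuinely $0$ when $h-t+1<0$, and so that rows $2,\dots,t$ come out lower triangular with $1$'s on the diagonal), and checking that the placement of $\eta^{-1}$ and the single sign pattern in the first row match the displayed $\A_\Iset$. As a sanity check I would specialise $t=1$: with $h=0$ the matrix is $[\,\eta^{-1}-\sigma_0\,]$, which is singular iff $\eta(-1)^k\prod_{i\in\Iset}\alpha_i = 1$, recovering \cref{lem:cond_MDS_very_simple_twist}; with $h=k-1$ it is $[\,\eta^{-1}-\sigma_{k-1}\,]$, singular iff $\eta\sum_{i\in\Iset}\alpha_i = -1$, recovering \cref{lem:cond_MDS_primitively_twisted_h=k-1}.
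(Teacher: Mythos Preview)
Your proposal is correct and follows essentially the same approach as the paper's proof: factor $f=p_\Iset g$, read off the $t-1$ vanishing-coefficient conditions for $x^k,\dots,x^{k+t-2}$ and the single twist relation between the $x^{k-1+t}$ and $x^h$ coefficients, and assemble these into the homogeneous system $\A_\Iset\b=\ve 0$. Your write-up is in fact slightly more careful than the paper's in two places---you explicitly justify $a_h\neq 0$ and you note via back-substitution that any nonzero kernel vector has $b_{t-1}\neq 0$, which cleanly closes the converse---and the $t=1$ sanity checks are a nice touch.
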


\begin{proof}
Let $f \in \evpolys$ be a polynomial with at least $k$ roots among the $\alpha_i$'s.
Then, there is an index set $\Af \subseteq \{1,\dots,n\}$ with $|\Af|=k$ and $f(\alpha_i) = 0$ for all $i \in \Af$.
We can factor $f$ into $f(x) = g(x) \cdot \sigma(x)$, with
\begin{align*}
g(x) = \sum\limits_{i=0}^{t-1} g_i x^i, \quad
\sigma(x) = \sum\limits_{i=0}^{k} \sigma_i x^i := \prod_{i \in \Af} (x-\alpha_i).
\end{align*}
Note that $\sigma_k=1$.
Since by construction, the coefficients in $f$ to $x^k, \ldots, x^{k+t-2}$ are zero, we obtain the following system of $(t-1)$ equations in the $g_j$'s,
\begin{align}
0 = \sum\limits_{j=0}^{i} g_j \sigma_{i-j} \quad i=k,\dots,k+t-2, \label{eq:one_hook_MDS_first_equations}
\end{align}
where $g_j := 0$ for all $j \notin \{ 0,\ldots, t-1 \}$, and $\sigma_j:=0$ for $j \notin \{0,\ldots,k\}$.
Considering the coefficients of $x^{k-1+t}$ and $x^h$, we obtain
$\eta a_h   = g_{t-1}$ and $a_h  = \sum_{j=0}^{h} g_j \sigma_{h-j}$
and hence
\begin{align}
0 = \eta^{-1} g_{t-1} -\sum\limits_{j=0}^{h} g_j \sigma_{h-j}. \label{eq:one_hook_MDS_fourth_equation}
\end{align}
Equations \eqref{eq:one_hook_MDS_first_equations} and \eqref{eq:one_hook_MDS_fourth_equation} result in a homogeneous system of $t$ equations in $t$ variables $\g := [g_{t-1},g_{t-2},\dots,g_{0}]\transpose$:
\begin{align}
\A_\Iset \cdot \g
=
\Zmatrix \label{eq:MDS_proof_system_one_twist}
\end{align}
The code $\Csingle$ is MDS if and only if the only polynomial $f \in \evpolys$ with at least $k$ roots is the zero polynomial, which holds if and only if the system \eqref{eq:MDS_proof_system_one_twist} has only the zero vector as solution for all choices of index sets $\Iset \subseteq \{1,\dots,n\}$ with $|\Iset|=k$. This implies the claim.
\end{proof}

\begin{remark}
If one includes $\infty$ as evaluation point and $h=k-1$, the above lemma is still true when considering $\Iset$ not containing $\infty$. Indeed if $f(\infty)=0$, then $\deg f < k-1$, which means the MDS property is not affected.
\end{remark}

As we will see in \cref{sec:Computer_Searches} many long MDS codes can be obtained using twisted codes for particular values of $q$. Hence an upper bound like in Theorems \ref{thm:inverse_mds_statement_very_simple_twist} and \ref{thm:inverse_mds_statement_very_simple_twist_h=k-1} does not hold for general $h$ and $t$.
On the other hand, it seems harder to find explicit constructions of such long MDS codes.
We do have the following result.
\begin{theorem}\label{thm:MDS_property_one_twist}
Let $\Fs \subsetneq \Fq$ be a proper subfield and $\alpha_1,\dots,\alpha_n \in \Fs$.
If $\eta \in \Fq \setminus \Fs$, then the twisted code $\Csingle$ is MDS.
\end{theorem}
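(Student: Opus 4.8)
The plan is to reduce everything to the determinant criterion of \cref{lem:MDS_property_one_twist}: it suffices to show that the $t\times t$ matrix $\A_\Iset$ is invertible for \emph{every} index set $\Iset\subseteq\{1,\dots,n\}$ with $|\Iset|=k$. The decisive observation is that, since $\alpha_1,\dots,\alpha_n\in\Fs$, each $\sigma_i$ is (up to sign) an elementary symmetric polynomial in the $\alpha_j$ with $j\in\Iset$, and hence lies in $\Fs$. Consequently all entries of $\A_\Iset$ lie in $\Fs$ \emph{except} its $(1,1)$-entry, which is $\eta^{-1}-\sigma_{h-t+1}$. Writing $\B_\Iset$ for the matrix obtained from $\A_\Iset$ by replacing this entry with $-\sigma_{h-t+1}$, we have $\A_\Iset=\B_\Iset+\eta^{-1}\,\ve e_1\ve e_1\transpose$ with $\B_\Iset\in\Fs^{t\times t}$.

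Next I would expand $\det\A_\Iset$ along the first row. Because $\A_\Iset$ and $\B_\Iset$ differ only in their $(1,1)$-entry, multilinearity of the determinant in the first row gives
\[
\det\A_\Iset \;=\; \det\B_\Iset \;+\; \eta^{-1}\,\mu_\Iset ,
\]
where $\mu_\Iset$ is the $(1,1)$-minor of $\A_\Iset$, i.e.\ the determinant of the matrix obtained by deleting its first row and first column. Reading off the displayed shape of $\A_\Iset$, this $(t-1)\times(t-1)$ submatrix is lower triangular with all diagonal entries equal to $1$, so $\mu_\Iset=1$; the same holds vacuously when $t=1$, where $\mu_\Iset$ is the empty determinant. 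Hence $\det\A_\Iset=\eta^{-1}+\det\B_\Iset$.

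To conclude, note that $\det\B_\Iset\in\Fs$ since all entries of $\B_\Iset$ lie in the field $\Fs$. If $\det\A_\Iset=0$ we would obtain $\eta^{-1}=-\det\B_\Iset\in\Fs$, and as $\Fs$ is a field containing the nonzero element $\eta^{-1}$, this forces $\eta\in\Fs$, contradicting $\eta\in\Fq\setminus\Fs$. Therefore $\det\A_\Iset\neq 0$ for all admissible $\Iset$, and \cref{lem:MDS_property_one_twist} yields that $\Csingle$ is MDS. I do not expect a genuine obstacle here: the only points demanding a little care are the index conventions ($\sigma_j:=0$ outside $\{0,\dots,k\}$, so the first-row entries really are symmetric functions of the $\alpha_j$, possibly zero) and the verification that striking the first row and column of $\A_\Iset$ leaves a unit lower-triangular matrix.
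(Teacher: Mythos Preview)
Your proof is correct and follows essentially the same approach as the paper: both reduce to \cref{lem:MDS_property_one_twist}, observe that all $\sigma_i\in\Fs$, and show $\det\A_\Iset=\eta^{-1}+T$ with $T\in\Fs$, whence $\eta\notin\Fs$ forces the determinant to be nonzero. The only cosmetic difference is that the paper reaches this form via elementary row operations bringing $\A_\Iset$ to lower-triangular shape with diagonal $(\eta^{-1}+T,1,\dots,1)$, while you use multilinearity in the first row together with the observation that the $(1,1)$-minor is unit lower-triangular; the resulting $T$ is the same (namely $\det\B_\Iset$).
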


\begin{proof}
Let $\eta \in \Fq \setminus \Fs$.
Let $\Iset \subseteq \{1,\dots,n\}$ be an index set with $|\Iset|=k$ and $\A_\Iset$ be the corresponding matrix as in \cref{lem:MDS_property_one_twist}.
Using elementary row operations, we can bring $\A_\Iset$ into lower triangular form
with diagonal elements $\eta^{-1} + T,1, \ldots, 1$ for a certain $T \in \Fq$. Using that $\sigma_i \in \F_s$ for all $i$ (since the same holds for all $\alpha_i$), we conclude that in fact $T \in \Fs$.
Since $\eta \not\in \Fs$, the diagonal elements of the triangular form of $\A_\Iset$ are all non-zero, implying that $\A_\Iset$ is regular.
\end{proof}

\section{Non-GRS MDS Twisted Codes}

\noindent
Since GRS codes are always MDS and well studied, we will in this section show that most of the twisted codes are not equivalent to an RS code. The main result is the following theorem.

\begin{theorem}\label{thm:not_GRS_one_twist}
Let $\alpha_1,\dots,\alpha_n \in \Fq$ and $2 < k < n-2$. 
Furthermore, let $\etaSet \subseteq \Fq$ satisfy that the twisted code $\Csingle$ is MDS for all $\eta \in \etaSet$.
Then there are at most $6$ choices of $\eta \in \etaSet$ such that $\Csingle$ is equivalent to an RS code.
\end{theorem}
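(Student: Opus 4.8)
The plan is to use the GRS characterization of Theorem~\ref{thm:GRS_characterization}: for each $\eta \in \etaSet$, since $\Csingle$ is MDS, it has a generator matrix of the form $\G_\eta = [\I \mid \A_\eta]$, and by Corollary~\ref{cor:min_k_n-k_>_2} the relevant obstruction to being GRS is item~\ref{itm:GRS_characterization_3}, i.e.\ the vanishing of \emph{all} $3\times 3$ minors of $\Atilde_\eta$, where $(\Atilde_\eta)_{ij} = (\A_\eta)_{ij}^{-1}$. The key observation I would establish first is that the entries of $\A_\eta$ (equivalently of $\Atilde_\eta$) are \emph{rational functions of $\eta$} of bounded degree: concretely, one writes down an explicit basis of $\evpolys$, row-reduces the $k\times n$ evaluation matrix to systematic form $[\I \mid \A_\eta]$, and observes that because the twist contributes a single extra monomial scaled by $\eta$, each entry of $\A_\eta$ is of the form (affine-linear in $\eta$)/(affine-linear in $\eta$), hence each entry of $\Atilde_\eta$ is likewise a ratio of degree-$\le 1$ polynomials in $\eta$. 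I would fix one particular $3\times 3$ submatrix of $\Atilde_\eta$ whose determinant is not identically zero as a function of $\eta$, clear denominators, and bound the degree of the resulting polynomial $P(\eta)$ in $\eta$; then the number of $\eta \in \etaSet$ with $\Csingle$ GRS is at most the number of roots of $P$.

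More carefully, here are the steps in order. First I would pin down the systematic generator matrix: split into the case $h \le k-1 < k-1+t$ always, and note the extra column involving $\eta$ couples only a bounded number of the pivot rows; after Gaussian elimination the entries of $\A_\eta$ are $\frac{c_{ij} + \eta d_{ij}}{u + \eta w}$ for constants depending on $\alphaVec$ (the common denominator coming from the single pivot whose normalization involves $\eta$). Since the code is MDS for $\eta \in \etaSet$, all these entries are nonzero there, so $\Atilde_\eta$ is well-defined and its entries are $\frac{u + \eta w}{c_{ij} + \eta d_{ij}}$. Second, I pick three rows $i_1,i_2,i_3$ and three columns $j_1,j_2,j_3$ of $\Atilde_\eta$ and expand the $3\times 3$ determinant; multiplying through by the (at most $9$, but with repetition along rows at most enough to bound the degree) linear factors in the denominators, I get a polynomial identity $P(\eta) = 0$ whenever $\Csingle$ is GRS, with $\deg P$ bounded by a small constant. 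Third — and this is where the bound $6$ must come out — I argue that a \emph{clever} choice of the $3\times 3$ submatrix makes $P$ have degree exactly $6$ (or at most $6$), and that $P \not\equiv 0$; the latter is guaranteed because a GRS code needs \emph{all} $3\times 3$ minors to vanish, and at least one choice of submatrix gives a nonzero polynomial in $\eta$ (otherwise the code would be GRS for all $\eta$, contradicting, e.g., the degree count or a direct computation showing the $\eta$-dependence is genuine). Then the set of GRS-giving $\eta$ is contained in the root set of $P$, which has size $\le 6$.

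The main obstacle is the third step: showing both that some $3\times 3$ minor of $\Atilde_\eta$ is \emph{not} the zero polynomial in $\eta$, and that after clearing denominators its degree is at most $6$. Proving $P \not\equiv 0$ likely requires exhibiting the minor explicitly and evaluating at a convenient $\eta$ (or examining its leading behavior as $\eta \to \infty$, where the twisted code degenerates in a controlled way — e.g.\ to something that is \emph{not} GRS), while the degree bound requires care about how many of the $9$ denominator factors are genuinely distinct linear forms in $\eta$: along any fixed row of $\Atilde_\eta$ the three denominators may share factors or the numerators $u+\eta w$ may be the same, so the naive bound $9$ must be cut down to $6$ by tracking the structure of which entries of $\A_\eta$ actually carry an $\eta$ (only those in the pivot rows that the twist column touches, plus the normalizing row). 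I would handle the bookkeeping by treating the $h=0$ and general $h$ cases with the same elimination template, isolating the at-most-two rows of $\A_\eta$ whose entries are non-constant in $\eta$, so that in any $3\times 3$ minor at most two rows contribute $\eta$, capping the total degree at $2\cdot 3 = 6$.
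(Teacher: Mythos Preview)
Your approach matches the paper's: bring the generator to systematic form $[\I \mid \A]$, invoke the GRS characterization, and bound the degree in $\eta$ of a fixed $3\times 3$ minor of $\Atilde$. Your form $\A_{ij}=(c_{ij}+\eta d_{ij})/(u+\eta w)$ with a \emph{common} linear denominator is correct, and in fact more accurate than the paper, which simply asserts that $\A_{ij}=c_{ij}+d_{ij}\eta$ is linear in $\eta$. The bound $6$ follows via the mechanism you already mention: the common numerator $u+\eta w$ of all entries of $\Atilde$ factors out as $(u+\eta w)^3$ from any $3\times 3$ determinant, and the remaining $\det\big([1/(c_{ij}+\eta d_{ij})]\big)$ has numerator of degree at most $6$ over $\prod_{i,j}(c_{ij}+\eta d_{ij})$; since $u+\eta w\neq 0$ for $\eta\in\etaSet$ (this is exactly invertibility of the left $k\times k$ block), only the degree-$6$ factor governs the vanishing.

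Your alternative mechanism in the last paragraph, however, is incorrect and should be dropped: it is \emph{not} true that at most two rows of $\A$ are non-constant in $\eta$. Row-reduction spreads the $\eta$-dependence to every row --- the difference between $\A$ and the ordinary RS systematic matrix is a rank-one perturbation $\tfrac{\eta}{u+\eta w}$ times an outer product whose left factor is generically nowhere zero --- so the ``$2\cdot 3=6$'' count has no basis. The common-numerator argument already delivers $6$. Your caveat that one must also check $P\not\equiv 0$ is apt; the paper does not address this point either.
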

\begin{proof}
Since $\Csingle$ is an MDS code, it has a generator matrix of the form $\G = [\I \mid \A]$. Equivalently, there exist polynomials $\fpolyi{i} \in \evpolys$ such that for all $i$ and $j$ in $1,\ldots, k$ it holds that $\fpolyi{i}(\alpha_j) = 1$ if $i=j$ and $\fpolyi{i}(\alpha_j) = 0$ if $i \neq j.$ Further for $j>k$ we have $\fpolyi{i}(\alpha_j) = \A_{i,j-k}$. In particular, since $\fpolyi{i} \in \evpolys$, the $(i,j)$th entry of $\A$ is of the form $\A_{i,j}=c_{i,j}+d_{i,j}\eta$ for certain $c_{i,j},d_{i,j} \in \Fq.$

Now we use item \ref{itm:GRS_characterization_3} of Theorem~\ref{thm:GRS_characterization} to derive an upper bound on the number of choices of $\eta$ such that $\Csingle$ is equivalent to an RS code. Let us consider the minor $M$ of the first three rows and columns of $\Atilde$. Then $\Csingle$ is not equivalent to an RS code if $M$ does not vanish. However, since $\Atilde_{i,j}=\frac{1}{c_{i,j}+d_{i,j}\eta}$, this minor is of the form
\begin{align*}
M=\frac{p(\eta)}{\prod_{i,j=1}^3(c_{i,j}+d_{i,j}\eta)},
\end{align*}
where $p(\eta)$ is a polynomial in $\eta$ of degree at most $6$. Hence $M$ can vanish for at most six values of $\eta$, which implies the theorem.
\end{proof}

\cref{thm:not_GRS_one_twist} directly implies the existence of non-GRS MDS twisted codes for many field sizes, as appears from the following corollaries.

\begin{corollary}
  \label{cor:existence_non_GRS_star_plus_codes}
   Suppose $(\Fq^*, \cdot)$ has a non-trivial subgroup $G$ such that $|\Fq^* \setminus G| >6$. Then for any $n, k$ with $2 < k < n-2$ and $n \leq |G|$ there exists a non-GRS MDS \startw code. Similarly if $(\Fq,+)$ has a non-trivial subgroup $V$ such that $|\Fq \setminus V|>6$, then for any $n, k$ with $2 < k < n-2$ and $n \leq |V|$ there exists a non-GRS MDS \plustw code.
\end{corollary}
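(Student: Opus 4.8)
The plan is to combine the two MDS existence results already established—Theorem~\ref{thm:MDS_property_simple_twist} for \startw codes and Theorem~\ref{thm:MDS_property_primitively_twisted_h=k-1} for \plustw codes—with the counting bound of Theorem~\ref{thm:not_GRS_one_twist}. First I would treat the multiplicative case. Let $G\leq(\Fq^*,\cdot)$ be the given non-trivial subgroup with $|\Fq^*\setminus G|>6$, fix $n\le |G|$ and $2<k<n-2$, and choose the evaluation points $\alpha_1,\dots,\alpha_n$ to be $n$ distinct elements of $G$ (this is possible since $n\le|G|$; note we need not include $0$). Now let $\etaSet:=\{\eta\in\Fq^*: (-1)^k\eta^{-1}\in\Fq^*\setminus G\}$. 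By \cref{def:startwisted}, for every $\eta\in\etaSet$ the code $\TRS{\alphaVec,1,0,\eta}{k}{}$ is a \startw code, hence MDS by Theorem~\ref{thm:MDS_property_simple_twist}. The map $\eta\mapsto(-1)^k\eta^{-1}$ is a bijection of $\Fq^*$ onto itself, so $|\etaSet|=|\Fq^*\setminus G|>6$.

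Next I would invoke Theorem~\ref{thm:not_GRS_one_twist} with this $\etaSet$ and the parameters $t=1$, $h=0$: since $2<k<n-2$ and every $\Csingle$ with $\eta\in\etaSet$ is MDS, at most $6$ values of $\eta\in\etaSet$ give a code equivalent to an RS code. As $|\etaSet|>6$, there exists $\eta\in\etaSet$ for which $\TRS{\alphaVec,1,0,\eta}{k}{}$ is an MDS code not equivalent to an RS code—that is, a non-GRS MDS \startw code, as claimed.

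The additive case is entirely parallel. Given $V\leq(\Fq,+)$ with $|\Fq\setminus V|>6$, fix $n\le|V|$ and $2<k<n-2$, and take $\alpha_1,\dots,\alpha_n$ to be $n$ distinct elements of $V$. Set $\etaSet:=\{\eta\in\Fq^*:\eta^{-1}\in\Fq\setminus V\}$; then $|\etaSet|\ge|\Fq^*\setminus V|\ge|\Fq\setminus V|-1>5$, but one should be slightly more careful here. Since $\eta\mapsto\eta^{-1}$ is a bijection of $\Fq^*$, and $0\notin V$ is impossible (as $V$ is a subgroup, $0\in V$), we actually get $|\etaSet|=|\Fq^*\setminus V|=|\Fq\setminus V|>6$. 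For each such $\eta$ the code $\TRS{\alphaVec,1,k-1,\eta}{k}{}$ is a \plustw code (Definition~\ref{def:plustwisted}), hence MDS by Theorem~\ref{thm:MDS_property_primitively_twisted_h=k-1}; applying Theorem~\ref{thm:not_GRS_one_twist} with $t=1$, $h=k-1$ again rules out at most $6$ bad $\eta$, leaving a non-GRS MDS \plustw code.

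The only subtle point—and the one I would check most carefully—is the bookkeeping on $\etaSet$: Theorem~\ref{thm:not_GRS_one_twist} is stated for a fixed set of evaluation points and a set $\etaSet$ of twist parameters, so I must confirm that the $\etaSet$ defined above really does make \emph{every} code MDS and has more than $6$ elements. Both follow from the bijectivity of the relevant reparametrization ($\eta\mapsto(-1)^k\eta^{-1}$ resp. $\eta\mapsto\eta^{-1}$) together with the hypothesis $|\Fq^*\setminus G|>6$ (resp. $|\Fq\setminus V|>6$). There is no real obstacle here; the corollary is essentially an application of the preceding theorems, and the proof is short.
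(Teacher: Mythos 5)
The paper states the corollary without an explicit proof, remarking only that Theorem~\ref{thm:not_GRS_one_twist} ``directly implies'' it. Your proof spells out that implication exactly as intended: choose evaluation points inside the subgroup (avoiding $0$ and $\infty$, which keeps you squarely within the hypotheses of Theorem~\ref{thm:not_GRS_one_twist}), observe that $\eta\mapsto(-1)^k\eta^{-1}$ (resp.\ $\eta\mapsto\eta^{-1}$) bijects $\F_q^*$ onto itself so that the admissible set $\etaSet$ has size $|\F_q^*\setminus G|$ (resp.\ $|\F_q\setminus V|$, using $0\in V$), invoke the MDS theorems to verify the hypothesis of Theorem~\ref{thm:not_GRS_one_twist}, and conclude since $|\etaSet|>6$. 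This is correct and is the argument the authors had in mind. You also correctly read Definition~\ref{def:plustwisted} as intending $(t,h)=(1,k-1)$ rather than the $(1,0)$ that appears there (an evident typo, given that Theorem~\ref{thm:MDS_property_primitively_twisted_h=k-1} relies on Lemma~\ref{lem:cond_MDS_primitively_twisted_h=k-1}, which concerns $h=k-1$). The only cosmetic flaw is the momentary wobble in the additive case (``$\ge|\F_q\setminus V|-1>5$\,'') before you settle on the exact equality $|\etaSet|=|\F_q\setminus V|$; the final count is right and the wobble could simply be deleted.
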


\begin{corollary}\label{cor:existence_non_GRS_CTRS_one_twist}
Let $\Fs \subsetneq \Fq$ with $|\Fq \setminus \Fs|>6$. Let $2 < k < n-2$ and $n\leq s$.
Then, there exists $\eta \in \Fq \setminus \Fs$ such that $\Csingle$ is MDS but not equivalent to a GRS code.
\end{corollary}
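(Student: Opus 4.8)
The plan is to combine Theorem~\ref{thm:MDS_property_one_twist} with Theorem~\ref{thm:not_GRS_one_twist} in a direct counting argument. First I would fix $\alpha_1,\dots,\alpha_n \in \Fs$ and set $\etaSet := \Fq \setminus \Fs$. By Theorem~\ref{thm:MDS_property_one_twist}, every $\eta \in \etaSet$ yields a twisted code $\Csingle$ that is MDS, so the hypotheses of Theorem~\ref{thm:not_GRS_one_twist} are satisfied (using $2 < k < n-2$). Hence at most $6$ values of $\eta \in \etaSet$ can give a code equivalent to an RS code.

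Next I would observe that $|\etaSet| = |\Fq| - |\Fs| = q - s > 6$ by hypothesis, so there are strictly more than $6$ admissible values of $\eta$. Therefore at least one $\eta \in \etaSet = \Fq \setminus \Fs$ remains for which $\Csingle$ is MDS but \emph{not} equivalent to a GRS code. This is exactly the claim. The condition $n \leq s$ is what guarantees we can actually pick $n$ distinct evaluation points inside the subfield $\Fs$, so that the construction is non-vacuous; I would mention this explicitly. The condition $k < n$ (needed for the code to be well-defined with dimension $k$) follows from $k < n-2$.

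The argument is essentially a pigeonhole step layered on two earlier theorems, so there is no real obstacle — the only thing to be careful about is checking that all side conditions of Theorem~\ref{thm:not_GRS_one_twist} genuinely hold: that $k$ and $n$ satisfy $2 < k < n-2$ (assumed), that the $\alpha_i$ lie in $\Fq$ (they lie in $\Fs \subseteq \Fq$), and that $t \le n-k$ so that $\Csingle$ is defined in the sense of Definition~\ref{def:CTRS_one_twist}; since the hook/twist parameters are not constrained in the statement, one implicitly takes any valid $(t,h)$ with $t \le n-k$, and Theorem~\ref{thm:MDS_property_one_twist} applies for all of them. I expect the whole proof to be three or four lines.
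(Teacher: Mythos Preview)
Your proposal is correct and matches the paper's intended argument: the corollary is presented in the paper without proof, as a direct consequence of Theorem~\ref{thm:MDS_property_one_twist} (providing a set $\etaSet=\Fq\setminus\Fs$ on which the code is MDS) together with the counting bound of Theorem~\ref{thm:not_GRS_one_twist}. Your write-up actually spells out more of the side conditions (e.g.\ $n\leq s$ allowing $n$ distinct $\alpha_i\in\Fs$) than the paper does.
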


\begin{remark}\label{rem:existence_non-GRS_single-twisted}
From $q \geq 13$, non-GRS \startw or \plustw codes of length $\lceil(q+1)/2 \rceil$ is guaranteed by \cref{cor:existence_non_GRS_star_plus_codes}.
If $q=p^m$ is a prime power with $p>3$ and $m>1$, then the restriction $|\Fq \setminus \Fs|>6$ of \cref{cor:existence_non_GRS_CTRS_one_twist} is fulfilled for all $\Fs \subsetneq \Fq$.
\end{remark}

\section{Computer Searches}
\label{sec:Computer_Searches}

In this section, we present exhaustive computer searches for twisted codes over small field sizes.
Since we are most interested in non-GRS codes, we only perform searches for $\eta \neq 0$ and $\min\{k,n-k\}>2$ (cf.~\cref{cor:min_k_n-k_>_2}).
The compututations were carried out using SageMath v7.4 \cite{stein_sagemath_????}.
The full results and the source code can be downloaded from \url{http://jsrn.dk/code-for-articles}.

\subsection{Number of \Startw Codes}

We counted all \startw codes over $\Fq$ for $q \leq 19$ and all \plustw codes over $\Fq$ for $q = 16$ and $q = 49$, i.e., the number of sets $S$ and $\eta$'s that fulfill the conditions of \cref{def:startwisted} respectively \cref{def:plustwisted}.
Moreover, we have determined how many of the resulting codes are inequivalent and which of those are not GRS codes.

As predicted, for odd $q$, there are \startw codes of length up to $\frac{q+1}{2}$ and arbitrary $k$. It also turns out that almost all \startw codes are non-GRS.
In particular, there is exactly 1 \startw $[11,6,3]$ code which is not GRS, even though \cref{cor:min_k_n-k_>_2} did not guarantee this.

\cref{tab:star_twisted_computer_search_example} exemplifies the results for $q=19$.

\begin{table}[h]
\caption{Number of \startw codes over $\F_{19}$ (Total / Inequivalent / Non-GRS).}
\label{tab:star_twisted_computer_search_example}
\centering
\renewcommand{\arraystretch}{1.3}
\begin{tabular}{c|c|c|c|c|c}
$n\backslash k$ & $3$ & $4$ & $5$ & $6$ & $7$ \\
\hline
$6$ & $1974/73/67$ &   &  &   &     \\
$7$ & $1092/67/67$ & $1092/63/63$  &  &   &     \\
$8$ & $405/25/25$ & $405/25/25$  & $405/25/25$ &   &     \\
$9$ & $90/7/7$ & $90/6/6$  & $90/6/6$ & $90/7/7$  &     \\
$10$ & $9/2/2$ & $9/1/1$  & $9/1/1$ & $9/2/2$  & $9/1/1$
\end{tabular}
\end{table}

\subsection{Comparison with Roth--Lempel Codes}

Roth and Lempel \cite{roth_construction_1989} gave a construction of non-GRS MDS codes: given $S \subset \F_q \cup \{\infty\}$ with $n = |S|$ which is \emph{not} a $(k-1)$-sum generator of $(\F_q, +)$, it produces an $[n,k]$ MDS code\footnote{%
  We say that a set $S$ containing $\infty$ is a $k$-sum generator if $S \setminus \{\infty\}$ is a $k$-sum generator.
  Note that for comparison with our codes, we relax the construction of \cite{roth_construction_1989} by allowing $S$ which does not contain $\infty$.
  }.
Roth and Lempel point out, similar to our \cref{def:plustwisted}, that e.g.~subgroups of $\F_q$ will give such non-$(k-1)$-generators.
When $q$ is even, these explicit constructions allow codes in a similar range as \plustw codes.
When $q$ is odd, their construction is much worse; for $q$ an odd prime, they remark in \cite{roth_t-sum_1992} that asymptotically their construction has at most length $\frac q k(1 + o(1))$.

For small $q$ one can exhaustively search for all non-$(k-1)$-sum generators, however, and thereby produce all Roth--Lempel (RL) codes.
We have done this for some parameters with the aim of determining how often \startw or \plustw codes are equivalent to RL codes; especially for the \plustw codes where the possible range of parameters largely coincides.

Our computer searches indicate that the code families are largely independent.
We give three examples:

For $(q, n, k) = (13, 7, 3)$, there are 35 inequivalent RL codes, while there are 2 \startw codes; 1 code is in both sets.
There are no \plustw codes of these parameters, but there are 8 twisted codes with $(t,h) = (1,k-1)$; 2 of these are RL codes.

For $(q, n, k) = (16, 8, 5)$, there are 186 inequivalent RL codes, while there are 9 inequivalent \plustw codes.
These codes are all different.
There are 83 twisted codes in total with $(t,h) = (1,k-1)$, and 10 of these are RL codes.

For $(q, n, k) = (23, 12, 5)$, there are no RL codes, while there is 1 equivalence class of \startw codes.

\subsection{Length $\approx q/2$ Codes with ``Exotic Twists''}

\startw and \plustw codes are explicit subclasses of the cases $(t,h) = (1,0)$ respectively $(t,h) = (1,k-1)$ which allow codes of length $\approx q/2$ for odd respectively even $q$.
A natural question is if similar long MDS codes are possible for twisted codes of other $(t,h)$.

We have no explicit construction, but exhaustive search indicates a resounding 'yes': in fact, for any $q \leq 19$ we verified that for almost \emph{any} choice of $(t, h)$ there is an $[n,k]$ twisted MDS code for $n = \lfloor q/2 \rfloor$ and $3 \leq k \leq n-3$, with the only exceptions being $(t, h) = (1, k-1)$ which fails for $(q,k) = (17, 4)$ and $q = 19$ and any $k$.

The non-existence for $(t, h) = (1, k-1)$ should indeed be expected for high enough $k$ due to \cref{lem:k_sum_gen_plus_MDS} and the asymptotic upper bounds on $M(k, \F_q)$ discussed in \cite{roth_t-sum_1992}.
However, it is surprising and intriguing that for all other $(t,h)$, there seem to be long twisted MDS codes.

\subsection{Counting Twisted MDS Codes}

\cref{tab:Search_Equivalence_Classes} enumerates all MDS twisted codes for $q \leq 13$, the number of equivalence classes as well as non-GRS equivalence classes.
We see that many parameters often lead to equivalent codes: e.g. for $(q,n,k)=(13,7,4)$, each equivalence class is obtainable by $\approx\!\!1200$ parameters on average.
This might be due to algebraic symmetries in the parameter choices, which is an interesting question to investigate.
However, most MDS twisted codes are non-GRS.
Note that in most of the cases, we can construct codes of length $q-1$ for $k \in \{3,n-3\}$.
Apart from Glynn's code and its dual \cite{glynn1986non}, we find only a single new code of length $q$: $(q,n,k)=(8,8,5)$ and constructible as e.g.~$\TRS{\alphaVec,1,4,1}{5}{}$ with $\alphaVec = \F_8 \cup \{ \infty \} \setminus \{ 1 \}$.
This code is also equivalent to a Roth--Lempel code.

{
\newcommand{\Count}[3]{#1$/$#2$/$#3}
\newcommand{\ZeroCount}{}

\begin{table}[b]
\caption{Counting MDS Twisted Codes (Total/Inequivalent/Non-GRS. Blank = 0/0/0)}
\label{tab:Search_Equivalence_Classes}
\centering
\renewcommand{\arraystretch}{1.2}
\setlength{\tabcolsep}{1pt}

\tiny
\begin{tabular}{@{}c|c||c|c|c|c|c|c|c|c@{}}
$q$ & $n$ & $k=3$                   & $4$                   & $5$                     & $6$                  & $7$                & $8$               & $9$ \\
\hline \hline
7
    & 6   & \Count{38}{3}{2}        &                       &                         &                      &                    &                   &     \\
\hline
8
    & 6   & \Count{406}{5}{4}       &                       &                         &                      &                    &                   &     \\
    & 7   & \ZeroCount              & \Count{63}{2}{1}      &                         &                      &                    &                   &     \\
    & 8   & \ZeroCount              & \ZeroCount            & \Count{14}{2}{1}        &                      &                    &                   &     \\
\hline
9
    & 6   & \Count{2374}{7}{5}      &                       &                         &                      &                    &                   &     \\
    & 7   & \Count{216}{3}{3}       & \Count{332}{3}{2}     &                         &                      &                    &                   &     \\
    & 8   & \Count{4}{1}{1}         & \Count{40}{1}{1}      & \Count{36}{1}{1}        &                      &                    &                   &     \\
    & 9   & \ZeroCount              & \Count{4}{1}{1}       & \Count{4}{1}{1}         & \ZeroCount           &                    &                   &     \\
\hline
11
    & 6   & \Count{32518}{15}{11}   &                       &                         &                      &                    &                   &     \\
    & 7   & \Count{6286}{21}{19}    & \Count{8554}{20}{18}  &                         &                      &                    &                   &     \\
    & 8   & \Count{585}{15}{15}     & \Count{160}{7}{6}     & \Count{960}{9}{7}       &                      &                    &                   &     \\
    & 9   & \Count{40}{3}{3}        & \ZeroCount            & \Count{20}{1}{0}        & \Count{135}{1}{0}    &                    &                   &     \\
    & 10  & \Count{2}{1}{1}         & \ZeroCount            & \ZeroCount              & \ZeroCount           & \Count{22}{2}{1}   &                   &     \\
\hline
13
    & 6   & \Count{216722}{26}{21}  &                       &                         &                      &                    &                   &     \\
    & 7   & \Count{71618}{80}{75}   & \Count{98430}{80}{75} &                         &                      &                    &                   &     \\
    & 8   & \Count{11164}{165}{160} & \Count{5176}{98}{93}  & \Count{26916}{139}{134} &                      &                    &                   &     \\
    & 9   & \Count{1110}{32}{31}    & \Count{41}{4}{3}      & \Count{381}{8}{5}       & \Count{5424}{24}{21} &                    &                   &     \\
    & 10  & \Count{138}{4}{4}       & \ZeroCount            & \ZeroCount              & \Count{93}{3}{0}     & \Count{1167}{4}{1} &                   &     \\
    & 11  & \Count{24}{1}{1}        & \ZeroCount            & \ZeroCount              & \ZeroCount           & \ZeroCount         & \Count{254}{1}{0} &     \\
    & 12  & \Count{2}{1}{1}         & \ZeroCount            & \ZeroCount              & \ZeroCount           & \ZeroCount         & \ZeroCount        & \Count{26}{2}{1}
\end{tabular}
\end{table}
}

\section{Decoding of Twisted Codes}

A simple decoding paradigm is possible for twisted codes:
guess the value of the hook coefficient $a_h$, and then apply an $[n, k]$-RS decoding algorithm on $r - \ev{\eta a_h x^{k-1+t}}{\alphaVec}$, where $r$ is the received word.
If the twisted code is over $\F_q$, this will apply the RS decoder $q$ times.
This works with errors, erasures, soft-decision, list-decoding, etc.
In particular, we have (cf.~\cite{justesen_complexity_1976}):

\begin{theorem}
  An $[n,k]$ twisted code over $\F_q$ can be decoded up to half the minimum distance in complexity $\Osoft(qn)$.
\end{theorem}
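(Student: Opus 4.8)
The plan is to formalize the decoding paradigm sketched in the paragraph preceding the theorem and to account carefully for its complexity. The key observation is that every codeword $\c \in \Csingle$ has the form $\c = \ev{f}{\alphaVec}$ for some $f = \sum_{i=0}^{k-1} a_i x^i + \eta a_h x^{k-1+t} \in \evpolys$. If we subtract the ``twist contribution'' $\ev{\eta a_h x^{k-1+t}}{\alphaVec}$ from $\c$, what remains is $\ev{\sum_{i=0}^{k-1} a_i x^i}{\alphaVec}$, i.e.\ a codeword of the ordinary $[n,k]$ RS code on the evaluation points $\alphaVec$ (recall that if $\infty$ is among the evaluation points then $h = k-1$ and the coefficient of $x^{k-1}$ is returned; the same happens for both codes, so subtraction is still consistent). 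Since the twisted code is MDS, its minimum distance is $d = n - k + 1$, which equals the minimum distance of that RS code.

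First I would argue correctness. Let $\rVec = \c + \eVec$ be the received word with $\wtH(\eVec) \le \lfloor (d-1)/2 \rfloor$. Iterate over all $q$ possible values $a \in \F_q$ for the hook coefficient $a_h$. For the correct guess $a = a_h$, the word $\rVec - \ev{\eta a x^{k-1+t}}{\alphaVec}$ equals $\ev{\sum_{i=0}^{k-1} a_i x^i}{\alphaVec} + \eVec$, which is within distance $\lfloor(d-1)/2\rfloor$ of the RS code, so the RS decoder returns the unique codeword $\ev{\sum_{i=0}^{k-1} a_i x^i}{\alphaVec}$; adding back the twist term recovers $\c$. For wrong guesses the RS decoder either fails or returns a different codeword, but in all cases we only need that \emph{some} iteration returns $\c$, and by uniqueness of bounded-distance decoding within the MDS code $\Csingle$ this is the unique output at distance $\le \lfloor(d-1)/2\rfloor$ from $\rVec$, so we can simply keep the first valid answer (or verify each candidate lies in $\Csingle$, which is an $O(n^2)$ or cheaper check). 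Hence the algorithm decodes up to half the minimum distance.

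Next I would bound the complexity. A single run of a standard RS bounded-distance decoder (e.g.\ Berlekamp--Massey / Welch--Berlekamp with fast polynomial arithmetic) on length $n$ costs $\Osoft(n)$ operations in $\F_q$, as in \cite{justesen_complexity_1976}. Forming $\rVec - \ev{\eta a x^{k-1+t}}{\alphaVec}$ costs $O(n)$, and there are $q$ iterations, giving $\Osoft(qn)$ in total. I would state the complexity in terms of $\F_q$-operations (or, if one insists on a bit model, absorb the $\Osoft(\log q)$ per-operation cost into the soft-$O$), which matches the claimed $\Osoft(qn)$.

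The main obstacle is essentially bookkeeping rather than a deep difficulty: one must be precise about what the RS decoder returns on a \emph{wrong} guess (it may decode to a spurious codeword that is still close to the modified received word), and argue that this does not cause an incorrect final output. The clean way around this is to rely on the MDS property of $\Csingle$ itself: within radius $\lfloor(d-1)/2\rfloor$ of $\rVec$ there is at most one codeword of $\Csingle$, so after collecting the (at most $q$) candidate codewords produced across all guesses, the unique one at distance $\le \lfloor(d-1)/2\rfloor$ from $\rVec$ — which exists by the correct-guess analysis — is the answer, and checking distances costs only $O(qn)$ extra. A secondary point to handle cleanly is the $\infty$ evaluation point when $h=k-1$, but as noted the twist monomial $x^{k-1+t}$ with $t \ge 1$ has its $x^{\ell}$-coefficient (with $\ell = k-1+t$) equal to $\eta a_{k-1}$, so $\ev{\eta a_h x^{k-1+t}}{\alphaVec}$ is well-defined and the subtraction leaves a genuine RS codeword; I would add a sentence spelling this out.
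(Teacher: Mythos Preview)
Your proposal is correct and follows exactly the approach the paper sketches in the paragraph preceding the theorem: guess the hook coefficient $a_h$, subtract the twist contribution, and run an $[n,k]$ RS decoder, for a total of $q$ calls at $\Osoft(n)$ each. One small remark: you need not assume the twisted code is MDS, since the theorem only claims decoding up to half the \emph{actual} minimum distance $d$; the Singleton bound gives $d \leq n-k+1$, so $\lfloor (d-1)/2\rfloor$ is always within the RS decoder's radius and the rest of your argument goes through unchanged.
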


Note that even if the twisted code is not MDS, this approach still gives a list-decoder up to $(n-k+1)/2$: collect the codewords obtained from each guess of $a_h$, and the correct codeword is on the resulting list if the number of errors is at most the decoding radius of the RS decoder.

\section{Conclusion}

We have introduced twisted Reed--Solomon codes, a new class of $\Fq$-linear codes, and demonstrated that for $q \ge 7$ the class contains many new non-GRS MDS codes.
We singled out two explicit subclasses, \startw and \plustw codes, with which we can construct MDS codes of length at least $q/2$ for any field size $q$, and that for most field sizes, most of these codes will be non-GRS.
Using computer searches we demonstrated that there seems to be many other and longer MDS, non-GRS twisted RS codes.

\bibliographystyle{IEEEtran}
\bibliography{main}

\end{document}